\tikzstyle{treejob} = [shape=circle, minimum width=.4cm, inner sep=0pt, draw=black]
\tikzstyle{inner} = [shape=circle, minimum width=.3cm, inner sep=0pt, draw=none, fill = white]
\tikzstyle{block} = [pattern= north east lines]
\newtheorem{theorem}{Theorem}
\newtheorem{lemma}{Lemma}
\newtheorem{corollary}{Corollary}
\newtheorem{fact}{Fact}
\newcommand{\assign}{\ensuremath{\leftarrow}}
\newcommand{\I}{\mathcal{I}}
\newcommand{\G}{\mathcal{G}}
\newcommand{\jobs}{\mathcal{J}}
\newcommand{\C}{\mathcal{C}}
\newcommand{\OO}{\mathcal{O}}
\newcommand{\N}{\mathbb{N}}
\newcommand{\R}{\mathbb{R}}
\newcommand{\opt}{\textsc{Opt}\xspace}
\newcommand{\alg}{\ensuremath{\mathcal{A}}\xspace}
\newcommand{\eps}{\varepsilon}
\newcommand{\fii}{\varphi}
\newcommand{\epsdeltafrac}{\frac{\eps}{\eps-\delta}}
\newcommand{\barj}{{J_i}}
\newcommand{\bari}{\mathcal{I}}
\newcommand{\barx}{X_i} 
\newcommand{\txi}{\ensuremath{\tau_{x,t}}}
\newcommand{\txinext}{\ensuremath{\tau_{x,t+1}}}
\newcommand{\yxi}{\ensuremath{Y_{x,t}}}
\newcommand{\yxinext}{\ensuremath{Y_{x,t+1}}}
\newcommand{\thresh}{{u_t}} 
\newcommand{\prevthresh}{{u_{t-1}}} 
\newcommand{\newjob}{\ensuremath{{j^{\star}}}}
\newcommand{\region}{region algorithm\xspace}
\newcounter{conditions}
\newcommand\myitem[1][]{\item[#1]\refstepcounter{conditions}\def\@currentlabel{#1}}
\newcommand{\hyprefR}{\ref{en:R}\xspace}
\newcommand{\hyprefC}{\ref{en:C}\xspace}
\newcommand{\hyprefP}{\ref{en:P}\xspace}
\newcommand{\hyprefLR}{\ref{en:LR}\xspace}
\newcommand{\hyprefLC}{\ref{en:LC}\xspace}
\newcommand{\hyprefLP}{\ref{en:LP}\xspace}
\definecolor{ubred}{RGB}{182,18,49}
\definecolor{utgreen}{RGB}{52,178,51}
\definecolor{utmagenta}{RGB}{207,0,114}
\definecolor{utyellow}{RGB}{254,209,0}
\definecolor{utblueORIGINAL}{RGB}{99,177,229}
\definecolor{utblue}{RGB}{0,156,235}  
\definecolor{utforest}{RGB}{0,106,77}
\definecolor{utmoos}{RGB}{0,106,77}
\definecolor{utpurple}{RGB}{79,45,127}
\definecolor{utnavy}{RGB}{10, 81, 163} 
\definecolor{utgrey}{RGB}{97,82,88}
\definecolor{utorange}{RGB}{236,122,8}
\def\schedjob#1#2#3#4#5{
	\fill[fill=#5] (#1,#3) rectangle (#2,#4);
}
\def\schedinterval#1#2#3#4#5{	
	\fill[#5!5] (#1,#3) rectangle (#2,#4);
	\draw[very thick, #5] (#1,#3+.2) to (#1,#3-.2);
	\draw[very thick, #5] (#2,#3+.2) to (#2,#3-.2);
	\draw[very thick, #5] (#1, #3) to (#2, #3);
 }
\def\blockinterval#1#2#3#4#5{
	\fill[draw=none, opacity = .8, white] (#1,#3) rectangle (#2,#4);
 	\draw[pattern=north east lines, pattern color=#5, draw=none] (#1,#3) rectangle (#2,#4);
}
\title{Online Throughput Maximization on Unrelated Machines: Commitment is No Burden}
\author{
\and Franziska Eberle\thanks{Department of Mathematics, London School of Economics and Political Sciences, UK. Email: \texttt{franziska.eberle@posteo.de}.}
\and Nicole Megow\thanks{Faculty for Mathematics and Computer Science, University of Bremen, Germany. Email: \texttt{nicole.megow@uni-bremen.de}. Partially supported by the German Science Foundation (DFG) under contract  ME 3825/1.}
\and Kevin Schewior\thanks{Departement of Mathematics and Computer Science, University of Cologne, Germany. Email: \texttt{kschewior@gmail.com}. Partially supported by the DAAD within the PRIME program using funds of BMBF and the EU Marie Curie Actions.}
}
\date{\today}
\begin{document}
\maketitle
 \begin{abstract}
We consider a fundamental online scheduling problem in which jobs with processing times and deadlines arrive online over time at their release dates. The task is to determine a feasible preemptive schedule on a single or multiple possibly unrelated
machines that maximizes the number of jobs that complete before their deadline. Due to strong impossibility results for competitive analysis on a single machine, 
we require that jobs contain some {\em slack} $\varepsilon>0$, which means that the feasible time window for scheduling a job is at least $1+\varepsilon$ times its processing time on each eligible machine. Our contribution is two-fold: (i) We give the first non-trivial online algorithms for throughput maximization on unrelated machines, and (ii), this is the main focus of our paper,
we answer the question on how to handle commitment requirements which enforce that a scheduler has to guarantee at a certain point in time the completion of admitted jobs. This is very relevant, e.g., in providing cloud-computing services, and disallows last-minute rejections of critical tasks.
We present an algorithm for unrelated machines that is $\Theta\big(\frac{1}\varepsilon\big )$-competitive when the scheduler must commit upon starting a job.
Somewhat surprisingly, this is the same optimal performance bound (up to constants) as for scheduling without commitment on a single machine. If commitment decisions must be made before a job's slack becomes less than a $\delta$-fraction of its size, we prove a competitive ratio of 
$\mathcal{O}\big(\frac{1}{\varepsilon - \delta}\big)$
for $0 < \delta < \varepsilon$. This result nicely interpolates between commitment upon starting a job and commitment upon arrival. For the latter commitment model, it is known that no (randomized) online algorithm admits any bounded competitive ratio. While we mainly focus on scheduling without migration, our results also hold when comparing against a migratory optimal solution in case of identical machines.
\end{abstract}

\section{Introduction} 

We consider the following online scheduling problem: there are given~$m$
unrelated parallel machines. Jobs from an unknown job set arrive online over time at their {\em release dates}~$r_j$. Each job~$j$ has a {\em deadline}~$d_j$ and a {\em processing time} $p_{ij}\in\mathbb{R}_+\cup\{\infty\}$, which is the execution time of $j$ when processing on machine $i$; both job parameters become known to an algorithm at job arrival. We denote a machine~$i$ with~$p_{ij}<\infty$ as {\em eligible} for job~$j$. If all machines are identical,~$p_{ij}=p_j$ holds for every job~$j$, and we omit the index~$i$. are~$m$ identical parallel machines to process these jobs or a subset of them. 
When scheduling these jobs or a subset of them, we allow {\em preemption}, i.e., the processing of a job can be interrupted at any time and may resume later without any additional cost. We mainly study scheduling without {\em migration} which means that a job must run completely on one machine. In case that we allow migration, a preempted job can resume processing on any machine, but no job can run simultaneously on two or more machines.

In a feasible schedule, two jobs are never processing at the same time on the same machine. A job is said to {\em complete} if it receives $p_{ij}$ units of processing time on machine~$i$
within the interval~$[r_j,d_j)$ if~$j$ is processed by machine~$i$.
The number of completed jobs in a feasible schedule is called {\em throughput}. The task is to find a feasible schedule with maximum throughput. We refer to this problem as {\em throughput maximization}.

As jobs arrive online and scheduling decisions are irrevocable, we cannot hope to find an optimal schedule even when scheduling on a single machine~\cite{DertouzosM89}. 
To assess the performance of online algorithms, we resort to standard {\em competitive analysis}. This means, we compare the throughput of an online algorithm with the throughput achievable by an optimal offline algorithm that knows the job set in advance. 

On a single machine, it is well-known that ``tight'' jobs with~$d_j - r_j \approx p_j$ prohibit competitive online decision making as jobs must start immediately and do not leave a chance for observing online arrivals~\cite{DBLP:journals/rts/BaruahKMMRRSW92}. Thus, it is commonly required that jobs contain some {\em slack}~$\eps>0$, i.e., every job~$j$ satisfies~$d_j - r_j \geq (1 + \eps) p_j$. In the more general setting with unrelated machines, we assume that each job $j$ satisfies~$d_j - r_j \geq (1 + \eps) p_{ij}$ for each machine $i$ that is eligible for~$j$, i.e., each machine $i$ with $p_{ij}<\infty$. The competitive ratio of our online algorithm will be a function of~$\eps$; the greater the slack, the better should the performance of our algorithm be. This slackness parameter has been considered in a multitude of previous work, e.g., in~\cite{LucierMNY13,AzarKLMNY15,SchwiegelshohnS16,GarayNYZ02,Goldwasser1999,BaruahH97,ChenEMSS2020}.
Other results for scheduling with deadlines use speed scaling, which can be viewed as another way to add slack to the schedule, see, e.g., \cite{BansalCP07,PruhsS10,AgrawalLLM18,ImM16,KalyanasundaramP00}.

In this paper, we focus on the question how to handle {\em commitment} requirements in online throughput maximization. Modeling commitment addresses the issue that a high-throughput schedule may abort jobs close to their deadlines in favor of many shorter and more urgent tasks~\cite{FergusonBKBF12}, which may not be acceptable for the job owner. Consider a company that starts outsourcing mission-critical processes to external clouds and that needs a guarantee that jobs complete before a certain time point when they cannot be moved to another computing cluster anymore. In other situations, a commitment to complete jobs  might be required even earlier just before starting the job, e.g., for a faultless copy of a database~\cite{ChenEMSS2020}. 

Different commitment models have been formalized~\cite{LucierMNY13,AzarKLMNY15,ChenEMSS2020}. The requirement to commit at a job's release date has been ruled out for online throughput maximization by strong impossibility results (even for randomized algorithms)~\cite{ChenEMSS2020}. 
We distinguish two commitment models.
\begin{enumerate}
	\item[(i)] {\em Commitment upon job admission}: an algorithm may discard a job any time before its start, we say its admission. This reflects a situation such as the faultless copy of a database.  
	\item[(ii)] {\em $\delta$-commitment}: given~$0 < \delta < \eps$, an algorithm must commit to complete a job while the job's remaining slack is at least a $\delta$-fraction of its original processing time. This models an early enough commitment (parameterized by $\delta$) for mission-critical jobs. For identical parallel machines, the latest time for committing to job~$j$ is then $d_j - (1 + \delta) p_j$. When given unrelated machines, such a commitment model might be arguably less relevant. We consider it only for non-migratory schedules and include also the choice of a processor in the commitment; we define the latest time point for committing to job~$j$ as $d_j - (1 + \delta) p_{ij}$ when processing~$j$ on machine $i$.
\end{enumerate}

Recently, a first unified approach has been presented for these models for a single machine~\cite{ChenEMSS2020}. In this and other works~\cite{LucierMNY13,AzarKLMNY15}, there remained gaps in the performance bounds and it was left open whether scheduling with commitment is even ``harder'' than without commitment. Moreover, it remained unsettled whether the problem is tractable on multiple identical or even heterogeneous machines.

In this work, we give tight results for online throughput maximization on unrelated parallel machines and answer the ``hardness'' question to the negative. 
We give an algorithm that achieves the provably best competitive ratio (up to constant factors) for the aforementioned commitment models. Somewhat surprisingly, we show that the same competitive ratio of $\OO\big(\frac{1}{\eps}\big)$ can be achieved for both, scheduling {\em without} commitment and {\em with} commitment upon admission. For unrelated machines, this is the first nontrivial result for online throughput maximization
with and without commitment. For identical parallel machines, this is the first online algorithm with bounded competitive ratio for arbitrary slack parameter~$\eps$. Interestingly, for this machine environment, our algorithm does not require job migration in order to be competitive against a migratory algorithm.

\subsection{Related work} Preemptive online scheduling and admission control have been studied rigorously. There are several results regarding the impact of deadlines on online scheduling; see, e.g.,~\cite{DBLP:conf/rtss/BaruahHS94,GarayNYZ02,Goldwasser1999} and references therein. In the following we give an overview of the literature focused on (online) throughput maximization. 

\paragraph{Offline scheduling.}
In case that the jobs and their characteristics are known to the scheduler in advance, the notion of commitment is irrelevant as an offline algorithm only starts jobs that will be completed on time; there is no benefit in starting jobs without completing them. The offline problem is well understood: For throughput maximization on a single machine, there is a polynomial-time algorithm by Lawler~\cite{Lawler1990}. The more general model, in which each job $j$ has a {\em weight} $w_j$ and the task is to maximize the total weight of jobs completed on time ({\em weighted throughput}), is NP-hard, and we do not expect polynomial time algorithms. The algorithm by Lawler solves this problem optimally in time~$\OO(n^5 w_{\max})$, where~$w_{\max} = \max_j w_j$, and can be used to design a fully polynomial-time approximation scheme (FPTAS)~\cite{PruhsW07}.

When given multiple identical machines, (unweighted) throughput maximization becomes NP-hard even for identical release dates~\cite{Lawler82}. Kalyanasundaram and Pruhs~\cite{KalyanasundaramP01} show a $6$-approximate reduction to the single-machine problem which implies a $(6+\delta)$-approximation algorithm for weighted throughput maximization on identical parallel machines, for any $\delta > 0$, using the FPTAS for the  single-machine problem~\cite{PruhsW07}. Preemptive throughput maximization on unrelated machines is much less understood from an approximation point of view. The problem is known to be strongly NP-hard~\cite{DuL91}, even without release dates~\cite{Sitters2005}. We are not aware of any approximation results for preemptive throughput maximization on unrelated machines. The situation is different for non-preemptive scheduling. In this case, throughput maximization is MAX-SNP hard~\cite{Bar-NoyGNS01} and several approximation algorithms for this general problem as well as for identical parallel machines and other special cases are known; see, e.g.,~\cite{Bar-NoyGNS01,berman,ImLM20}.

\paragraph{Online scheduling without commitment.}
For single-machine throughput maximization, Baruah, Haritsa, and Sharma~\cite{DBLP:conf/rtss/BaruahHS94} show that, in general, no deterministic online algorithm achieves a bounded competitive ratio. Thus, their result justifies our assumption on~$\eps$-slackness of each job. Moreover, they consider special cases such as unit-size jobs or agreeable deadlines where they provide constant-competitive algorithms even without further assumptions on the slack of the jobs. Here, deadlines are agreeable if~$r_j \leq r_{j'}$ for two jobs~$j$ and~$j'$ implies~$d_j \leq d_{j'}$. In our prior work~\cite{ChenEMSS2020}, we improve upon previous bounds~\cite{KalyanasundaramP00,LucierMNY13} for general instances by giving an~$\OO\big(\frac{1}{\eps}\big)$-competitive algorithm, and we show an asymptotically matching lower bound for deterministic algorithms.

For maximizing  weighted throughput, Lucier et al.~\cite{LucierMNY13} give an~$\OO\big(\frac{1}{\eps^2}\big)$-competitive online algorithm for scheduling on identical parallel machines. In a special case of this problem, called {\em machine utilization} the goal is to maximize the total processing time of completed jobs, i.e., for $p_j=w_j$ for any job $j$. This problem is much more tractable. On a single machine, Baruah et al.~\cite{DBLP:conf/focs/BaruahKMRRS91,DBLP:journals/rts/BaruahKMMRRSW92} provide a best-possible online algorithm achieving a competitive ratio of~$4$, even without any slackness assumptions. Baruah and Haritsa~\cite{BaruahH97} are the first to investigate the problem under the assumption of~$\eps$-slack and give a~$\frac{1+\eps}{\eps}$-competitive algorithm which is asymptotically best possible. For parallel identical machines (though without migration), DasGupta and Palis~\cite{DBLP:conf/approx/DasGuptaP00} give a simple greedy algorithm that achieves the same performance guarantee of~$\frac{1+\eps}{\eps}$ and give an asymptotically matching lower bound. Schwiegelshohn and Schwiegelshohn~\cite{SchwiegelshohnS16} show that migration helps an online algorithm and improves the competitive ratio to $\OO\big(\sqrt[m]{1/\eps}\big)$ for~$m$ machines.

In a line of research without slackness assumption, Baruah et al.~\cite{DBLP:conf/focs/BaruahKMRRS91} show a lower bound of~${(1+\sqrt{k})^2}$ for deterministic single-machine algorithms, where~$k = \frac{\max_j w_j/p_j}{\min_j w_j/p_j}$ is the \emph{importance ratio} of a given instance. Koren and Shasha give a matching upper bound~\cite{DBLP:journals/siamcomp/KorenS95} and generalize it to~$\Theta(\ln k)$ for parallel machines if~$k>1$~\cite{DBLP:journals/tcs/KorenS94}. Interestingly, Kalyanasundaram and Pruhs~\cite{KalyanasundaramP03} give a \emph{randomized}~$\OO(1)$-competitive algorithm for throughput maximization on a single machine without slackness assumption, which is in stark contrast to the impossibility result for deterministic single-machine algorithms.
	Very recently, Moseley et al.~\cite{MoseleyPSZ21} give a remarkable $\OO(1)$-competitive \emph{deterministic} algorithm for $m\geq 2$ parallel identical machines.

\paragraph{Online scheduling with commitment upon job arrival.} In our prior work~\cite{ChenEMSS2020}, we rule out bounded competitive ratios for any (even randomized) online algorithm for throughput maximization with commitment upon job arrival, even on a single machine. Previously, such impossibility results where only shown exploiting
	weights~\cite{LucierMNY13}.

Again, the special case~$w_j = p_j$, or machine utilization, is much more tractable than weighted or unweighted throughput maximization. A simple greedy algorithm already achieves the best possible competitive ratio~$\frac{1+\eps}{\eps}$ on a single machine, even for commitment upon arrival, as shown by DasGupta and Palis~\cite{DBLP:conf/approx/DasGuptaP00} and the matching lower bound by Garay et al.~\cite{GarayNYZ02}. For scheduling with commitment upon arrival on~$m$ parallel identical machines, there is an $\OO(\sqrt[m]{1/\eps})$-competitive algorithm and an almost matching lower bound by Schwiegelshohn and Schwiegelshohn~\cite{SchwiegelshohnS16}
Suprisingly, this model also allows for bounded competitive ratios when preemption is not allowed. In this setting, Goldwasser and Kerbikov~\cite{GoldwasserK03} give a best possible~$\big(2 + \frac1\eps\big)$-competitive algorithm on a single machine.
Very recently, Jamalabadi, Schwiegelshohn, and Schwiegelshohn~\cite{JamalabadiSS20} extend this model to parallel machines; their algorithm is near optimal with a performance guarantee approaching~$\ln \frac1\eps$ as~$m$ tends to infinity. 

\paragraph{Online scheduling with commitment upon admission and ${\delta}$-commitment.} In our previous work~\cite{ChenEMSS2020}, we design an online single-machine algorithm, called the \emph{region algorithm}, that simultaneously (with the respective choice of parameters) achieves the first non-trivial upper bounds for both commitment models. For commitment
upon job admission, our bound is~$\OO\big(\frac 1{\eps^2}\big)$, and in the $\delta$-commitment model it is $\OO\big(\frac{\eps}{(\eps -\delta)\delta^2}\big)$, for $0<\delta<\eps$. For scheduling on identical parallel machines and commitment upon admission, 
 Lucier et al.~\cite{LucierMNY13} give a heuristic that empirically performs very well but for which they cannot show a rigorous worst-case bound. In fact, Azar et al.~\cite{AzarKLMNY15} show that no bounded competitive ratio is possible for weighted throughput maximization for small~$\eps$. For~$\delta = \frac\eps2$ in the {$\delta$-commitment model}, they design (in the context of truthful mechanisms) an online algorithm for weighted throughput maximization that is~$\Theta\big( \frac{1}{\sqrt[3]{1+\eps} -1} + \frac{1}{(\sqrt[3]{1+\eps} -1)^2} \big)$-competitive if the slack~$\eps$ is sufficiently large, i.e., if~$\eps > 3$.  
 For weighted throughput, this condition on the slack is necessary as is shown by a strong general lower bound, even on a single machine~\cite{ChenEMSS2020}. For the unweighted setting, we give the first rigorous upper bound for arbitrary $\eps$ in this paper for both models, commitment upon admission and $\delta$-commitment, in the identical and even in the unrelated machine environment.

Machine utilization is again better understood. As commitment upon arrival is more restrictive than commitment upon admission and $\delta$-commitment, the previously mentioned results immediately carry over and provide bounded competitive ratios. 

\subsection{Our results and techniques}

Our main result is an algorithm that computes a non-migratory schedule that is best possible (up to constant factors) for online throughput maximization with and without commitment on identical parallel machines and, more generally, on unrelated machines. This is the first non-trivial online result for unrelated machines and it closes gaps for identical parallel machines. Our algorithm is universally applicable (by setting parameters properly) to both commitment models as well es scheduling without commitment. 

\begin{theorem}
	\label{theo:com:UB}
	Consider throughput maximization on 
	unrelated machines
	without migration. There is an $\OO\big(\frac{1}{\eps-\delta'}\big)$-competitive non-migratory online algorithm for scheduling with commitment, where $\delta'= \frac \eps 2$ in the model with commitment upon admission and $\delta' = \max\{\delta,\frac{\eps}{2}\}$ in the $\delta$-commitment model. 
\end{theorem}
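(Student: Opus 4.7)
The plan is to extend the single-machine region algorithm of \cite{ChenEMSS2020} to unrelated machines by adding an online assignment rule. On each machine $i$ the algorithm maintains a nested system of \emph{regions}: admitting a job $j$ on machine $i$ reserves a time window of length roughly $(1+\eps-\delta')p_{ij}$ for its processing, and admission is the moment of commitment. Setting $\delta'=\eps/2$ in the commitment-upon-admission model, and $\delta'=\max\{\delta,\eps/2\}$ in the $\delta$-commitment model, leaves ``usable slack'' $\eps-\delta'$ for future preemptions inside each region -- exactly what the single-machine analysis needs -- while the residual slack $\delta'$ suffices to meet the respective commitment deadline (admission time, or $d_j-(1+\delta)p_{ij}$).

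At each event the algorithm acts as follows. When a job $j$ arrives, it iterates through the eligible machines and checks, using the nesting criterion of the single-machine algorithm applied to $p_{ij}$, whether $j$ can be admitted on machine $i$ without breaking an existing commitment; if so, it picks the first such machine, opens a region for $j$, and commits to it; otherwise $j$ is rejected on all machines. On each machine the innermost open region's job is always processed, and a newly admitted job preempts the current one only if its $p_{ij}$ is a sufficiently small fraction of the running job's, which bounds the nesting depth. Because migration is forbidden, the per-machine behaviour is a black-box instance of the single-machine region algorithm, and the feasibility proof (no admitted job misses its deadline) transfers verbatim.

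The competitive ratio is obtained by a charging scheme. For every job $j^*$ that \opt completes on some machine $i^*$ but which our algorithm does not complete, we charge $j^*$ to the admitted jobs that were running on $i^*$ during the interval $[r_{j^*},d_{j^*})$. The key local invariant inherited from the single-machine analysis is that within any such interval the admitted jobs on $i^*$ have $p_{i^*\cdot}$-processing times geometrically smaller than $p_{i^*j^*}$, so each admitted job is charged by only $\OO(1/(\eps-\delta'))$ rejected \opt-jobs; summing over all machines yields the claimed competitiveness.

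The main obstacle is decoupling the algorithm's assignment from \opt's. When we reject $j^*$ even though \opt completes it on $i^*$, we must show that $i^*$ was itself busy enough with our admitted jobs during $[r_{j^*},d_{j^*})$ to absorb the charge, even if the assignment rule had earlier attempted to place $j^*$ on a different machine. The crucial structural claim is that a failed admission of $j^*$ on every eligible machine -- in particular on $i^*$ -- implies that $i^*$ was continuously covered by admitted regions of comparable $p_{i^*\cdot}$-length, precisely because otherwise the nesting criterion would have admitted $j^*$ on $i^*$. Establishing this local sufficient-busyness statement for $i^*$, independent of where the algorithm actually attempted to route $j^*$, is where the unrelated-machines analysis departs from the single-machine one and what drives the matching $\OO(1/(\eps-\delta'))$ bound.
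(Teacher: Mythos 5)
Your proposal rests on generalizing the \emph{region algorithm} to unrelated machines and using it for scheduling with commitment, but this is not what the paper does, and it cannot achieve the claimed bound. The paper proves Theorem~\ref{theo:com:UB} via a new \emph{blocking algorithm}, and explicitly explains why the region algorithm is insufficient here: the region algorithm maintains ``one long region with a uniform acceptance threshold and is then too conservative in accepting jobs.'' Concretely, once an algorithm commits to a job $j$ of processing time $p_{ij}$ with remaining slack $\Theta(\delta')p_{ij}$, a uniform acceptance threshold $\gamma p_{ij}$ forces it to reject most jobs of size close to $\gamma p_{ij}$ (since admitting $\Theta(\delta'/\gamma)$ of them would overflow $j$'s slack), yet \opt could complete $\Theta(1/\gamma)$ of them in the same window. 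Tightening $\gamma$ to preserve feasibility is precisely why the single-machine region algorithm of \cite{ChenEMSS2020} with commitment achieves only $\OO(1/\eps^2)$, as the paper recalls in its related-work discussion. The blocking algorithm's crucial new ingredient, which your proposal omits entirely, is the \emph{blocking period}: after admitting a child $k$, a window of length $\beta p_{ik}$ is reserved during which jobs of similar size ($p \geq \tfrac12 p_{ik}$) are refused, but strictly smaller jobs may still be admitted. This caps the contribution per size class geometrically (the sets $K_1$ and $K_2$ in the proof of Theorem~\ref{theo:com:CompleteAll}) while still letting the algorithm accept an $\eps$-fraction of each size class, which is what makes $\gamma = \Theta(\delta)$ feasible and yields the tight $\OO\big(\tfrac{1}{\eps-\delta'}\big)$ bound.

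Your feasibility claim (``transfers verbatim'') therefore does not hold as stated: the completion proof for the blocking algorithm requires Lemma~\ref{lem:com:LengthSchedulingInterval} and the blocking-period accounting of Theorem~\ref{theo:com:CompleteAll}, which have no analogue in the plain region algorithm. Your charging argument is also substantially weaker than the paper's: the paper formalizes Properties~\ref{enum:alg:available}--\ref{enum:alg:threshold}, proves the structural Lemma~\ref{lem:Volume}, and builds the explicit partition $(G_t)$ in Theorem~\ref{theo:otm:Charging} so that excess jobs released during one interval can be safely recharged to later intervals; the handwave ``each admitted job is charged by only $\OO(1/(\eps-\delta'))$ rejected jobs'' hides exactly the recharging difficulty that Lemma~\ref{lem:Volume} resolves. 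The per-machine/first-eligible-machine assignment idea you sketch is consistent with the paper's algorithm and the decoupling issue you raise is real, but it is addressed by the machine-indexed thresholds $u_{i\tau}$ rather than a busyness argument. In short, the missing idea is the blocking period, and without it the route you propose lands at $\OO(1/\eps^2)$, not $\OO(1/(\eps-\delta'))$.
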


For scheduling with commitment upon admission, 
this is (up to constant factors) an optimal online algorithm with competitive ratio~$\Theta\big(\frac1\eps\big)$, matching the lower bound of $\Omega\big(\frac{1}{\eps}\big)$ for $m=1$~\cite{ChenEMSS2020}. For scheduling with~$\delta$-commitment, our result interpolates between the models with commitment upon starting a job and commitment upon arrival. If~$\delta \leq \frac \eps 2$, the competitive ratio is~$\Theta\big(\frac1\eps\big)$, which is again best possible~\cite{ChenEMSS2020}. For~$\delta \rightarrow \eps$, the commitment requirements essentially implies commitment upon job arrival which has unbounded competitive ratio~\cite{ChenEMSS2020}. 

In our analysis, we compare a non-migratory schedule, obtained by our algorithm, with an optimal non-migratory schedule. However, in the case of identical machines the throughput of an optimal migratory schedule can only be larger by a constant factor than the throughput of an optimal non-migratory schedule. In fact, Kalyanasundaram and Pruhs~\cite{KalyanasundaramP01} showed that this factor is at most $\frac{6m-5}{m}$. Thus, the competitive ratio for our non-migratory algorithm, when  applied to identical machines, holds (up to this constant factor) also in a migratory setting.

\begin{corollary}
	Consider throughput maximization with or without migration on parallel identical machines. There is an $\OO\big(\frac{1}{\eps-\delta'}\big)$-competitive non-migratory online algorithm for scheduling with commitment, where $\delta'= \frac \eps 2$ in the model with commitment upon admission and $\delta' = \max\{\delta,\frac{\eps}{2}\}$ in the $\delta$-commitment model. 
\end{corollary}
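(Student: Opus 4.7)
The plan is a short chaining argument that lifts Theorem~\ref{theo:com:UB} from the unrelated, non-migratory setting to identical parallel machines in both the migratory and non-migratory regimes, at the cost of only a constant factor.

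First, I would observe that identical parallel machines form a special case of unrelated machines (set $p_{ij}=p_j$ for all eligible pairs). Hence Theorem~\ref{theo:com:UB}, applied to an instance $\I$ on $m$ identical machines, directly produces a non-migratory online algorithm $\alg$ that is $\OO\bigl(\tfrac{1}{\eps-\delta'}\bigr)$-competitive with respect to the throughput $\opt_{\mathrm{nm}}(\I)$ of an optimal non-migratory offline schedule, with the parameter $\delta'$ chosen as in the statement for each commitment model. This already proves the corollary in the non-migratory case, since the algorithm, its commitment behavior, and its competitive ratio carry over verbatim.

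Second, to handle the migratory case, I would invoke the result of Kalyanasundaram and Pruhs~\cite{KalyanasundaramP01}, mentioned just before the corollary, which asserts that for identical parallel machines the throughput $\opt_{\mathrm{mig}}(\I)$ of an optimal migratory schedule is at most $\frac{6m-5}{m}\le 6$ times $\opt_{\mathrm{nm}}(\I)$. Chaining this with the first step,
\begin{equation*}
\opt_{\mathrm{mig}}(\I) \;\le\; \frac{6m-5}{m}\,\opt_{\mathrm{nm}}(\I) \;\le\; \frac{6m-5}{m}\cdot \OO\!\left(\frac{1}{\eps-\delta'}\right)\cdot \alg(\I) \;=\; \OO\!\left(\frac{1}{\eps-\delta'}\right)\cdot \alg(\I),
\end{equation*}
shows that the same non-migratory algorithm $\alg$ is also $\OO\bigl(\tfrac{1}{\eps-\delta'}\bigr)$-competitive against any migratory optimum, absorbing the constant $\frac{6m-5}{m}$ into the $\OO$-notation.

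There is essentially no main obstacle: both ingredients are imported, Theorem~\ref{theo:com:UB} (proved later in the paper) and the Kalyanasundaram--Pruhs comparison bound. The only point that requires a sentence of care is that the algorithm $\alg$ inherited from Theorem~\ref{theo:com:UB} respects whichever commitment model is chosen (commitment upon admission or $\delta$-commitment), and that this commitment behavior is unaffected by the comparison to a migratory benchmark, since commitment is a property of the online algorithm and not of the offline comparator. With that remark, the proof is complete.
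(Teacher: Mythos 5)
Your proposal is correct and follows the paper's own reasoning exactly: specialize Theorem~\ref{theo:com:UB} to identical machines (a special case of unrelated machines), then chain with the Kalyanasundaram--Pruhs bound $\opt_{\mathrm{mig}}\le\frac{6m-5}{m}\,\opt_{\mathrm{nm}}$ to absorb the migratory-vs.-non-migratory gap into the constant. Nothing to add.
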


The challenge in online scheduling with commitment is that, once we committed to
complete a job, the remaining slack of this job has to be spent very carefully. The key component is a job admission scheme which is implemented by different parameters. The high-level~objectives~are: 
\begin{enumerate}
	\item never start a job for the first time if its remaining slack is too small (parameter~$\delta$),
	\item during the processing of a job, admit only significantly shorter jobs (parameter $\gamma$), and 
	\item for each admitted shorter job, block some time period~(parameter~$\beta$) during which no other jobs of similar size are accepted.
\end{enumerate}

While the first two goals are quite natural and have been used before in the single and identical machine setting~\cite{LucierMNY13,ChenEMSS2020}, the third goal is crucial for our new tight result.
The intuition is the following: Think of a single eligible machine in a non-migratory schedule. Suppose we committed to complete a job with processing time~$1$ and have only a slack of $\OO(\eps)$ left before the deadline of this job. Suppose that $c$ substantially smaller jobs of size~$\frac1c$ arrive where $c$ is the competitive ratio we aim for. On the one hand, if we do not accept any of them, we cannot hope to achieve $c$-competitiveness. On the other hand, accepting too many of them fills up the slack and, thus, leaves no room for even smaller jobs. The idea is to keep the flexibility for future small jobs by only accepting an~$\eps$-fraction of jobs of similar size (within a factor two).

We distinguish two time periods that guide the acceptance decisions. During the {\em scheduling interval} of a job~$j$, we have a more restrictive acceptance scheme that ensures the completion of~$j$ whereas in the {\em blocking period} we guarantee the completion of previously accepted jobs. We call our algorithm {\em blocking} algorithm. This acceptance scheme is much more refined than the one of the known region algorithm in~\cite{ChenEMSS2020} that uses one long region with a uniform acceptance threshold and is then too conservative in accepting jobs.

Given that we consider the non-migratory version of the problem, a generalization from a single to multiple machines seems natural. It is interesting, however, that such a generalization works, essentially on a per-machine basis, even for unrelated machines and comes at no loss in the competitive ratio.

Clearly, scheduling with commitment is more restrictive than without commitment. Therefore, our algorithm is also $O\big(\frac{1}{\eps}\big)$-competitive for 
maximizing the throughput on unrelated machines without any commitment requirements. Again, this is optimal (up to constant factors)  as it matches the lower bound on the competitive ratio for deterministic online algorithms on a single machine~\cite{ChenEMSS2020}.

\begin{corollary}
	There is a~$\Theta\big(\frac1\eps\big)$-competitive algorithm for online throughput maximization on 
	unrelated machines without commitment requirements and
	without migration.
\end{corollary}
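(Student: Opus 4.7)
The plan is to derive the corollary as an immediate consequence of \Cref{theo:com:UB} combined with the previously cited single-machine lower bound. The central observation is that throughput maximization without any commitment requirement is a relaxation of throughput maximization with commitment: any feasible schedule produced under commitment constraints remains feasible in the unconstrained model, and the throughput it achieves is identical. Hence every algorithm with competitive ratio $c$ in a commitment model is also $c$-competitive without commitment (since the benchmark \opt\ in the no-commitment setting can only be as large as or larger than \opt\ with commitment, but crucially, we can compare our algorithm's throughput to the no-commitment \opt\ by using the fact that without commitment \opt\ is at most what the offline optimum achieves, which is the same in both models because an offline optimum never starts a job it will not complete).

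More concretely, first I would invoke \Cref{theo:com:UB} in the commitment-upon-admission regime, for which the theorem yields competitive ratio $\OO\big(\frac{1}{\eps-\delta'}\big)$ with $\delta'=\frac{\eps}{2}$, giving $\OO\big(\frac{1}{\eps/2}\big)=\OO\big(\frac{1}{\eps}\big)$. Since the blocking algorithm produces a non-migratory schedule on unrelated machines in which every started job is completed, the same algorithm (with the same parameter choices) trivially satisfies the no-commitment setting and its throughput against the offline optimum without commitment is bounded by the same $\OO\big(\frac{1}{\eps}\big)$ ratio, because an offline optimum has no incentive to start jobs it does not finish in either model. This yields the $\OO\big(\frac{1}{\eps}\big)$ upper bound.

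For the matching lower bound, I would simply cite the $\Omega\big(\frac{1}{\eps}\big)$ bound for single-machine deterministic online throughput maximization without commitment established in~\cite{ChenEMSS2020}. Since a single machine is a special case of unrelated machines (set $p_{ij}=\infty$ for all but one machine), the lower bound extends to the unrelated-machines, non-migratory setting verbatim. Combining both directions gives $\Theta\big(\frac{1}{\eps}\big)$, as claimed.

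The main conceptual point — and the only subtlety worth highlighting — is the argument that an algorithm designed for the stricter commitment model still competes well against the (potentially larger) offline optimum in the no-commitment model. This is routine but should be spelled out: the offline optimum in either model can be realized by a schedule that processes only jobs it completes, so the two offline optima coincide, and thus the competitive ratio transfers without any loss. No further calculation is required.
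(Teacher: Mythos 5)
Your proposal is correct and follows essentially the same route as the paper: instantiate \Cref{theo:com:UB} with commitment upon admission to obtain the $\OO(1/\eps)$ upper bound, observe that the offline benchmark coincides in the two models (since an offline optimum never starts a job it will not finish) so the ratio transfers to the no-commitment setting, and cite the single-machine $\Omega(1/\eps)$ lower bound from~\cite{ChenEMSS2020}, which embeds into unrelated machines. The paper compresses the transfer step to the single word ``Clearly''; you merely spell out the same observation.
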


However, for scheduling without commitment, we are able to generalize the simpler region algorithm presented for the single-machine problem in~\cite{ChenEMSS2020} to scheduling on unrelated machines. 

\begin{theorem}\label{theo:otm:UB-no}
	A generalization of the region algorithm is~$\Theta\big(\frac1\eps\big)$-competitive for online throughput maximization on unrelated machines without commitment requirements and
	without migration. 
\end{theorem}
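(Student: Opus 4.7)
The plan is to generalize the region algorithm on a per-machine basis. I would maintain, for every machine~$i$, a local copy of the region data structure (active region, acceptance threshold, remaining slack, etc.) whose state depends only on jobs that the algorithm has actually placed on~$i$. When a new job~$j$ arrives at time~$r_j$, I would check for every eligible machine~$i$ whether the single-machine region algorithm associated with~$i$, presented with a job of length~$p_{ij}$ and deadline~$d_j$, would admit~$j$ given~$i$'s current local state. If at least one machine would admit~$j$, I pick an arbitrary such machine, assign~$j$ to it, and update only that machine's local state exactly as the single-machine algorithm does; otherwise I reject~$j$. Scheduling on each machine then proceeds exactly as in the single-machine region algorithm, and no job is ever migrated.

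For the analysis I would fix an optimal non-migratory offline schedule~$\opt$ and, for every machine~$i$, write~$T_i^{\opt}$ for the jobs completed by~$\opt$ on~$i$ and~$T_i^{\alg}$ for the jobs completed by our algorithm on~$i$. The key observation is that the local state of machine~$i$ evolves exactly as the single-machine region algorithm of~\cite{ChenEMSS2020} would evolve on the subsequence of jobs actually routed to~$i$ (with processing times~$p_{ij}$); crucially, the decisions this local state uses to accept or reject are a function of previously placed jobs on~$i$ alone, so the single-machine analysis applies verbatim on each~$i$.

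To combine the per-machine analyses I would charge as follows. For each job~$j$ completed by~$\opt$ but not by our algorithm, let~$i^\star(j)$ be the machine on which~$\opt$ completes~$j$. Since the algorithm did not complete~$j$, and~$i^\star(j)$ is eligible for~$j$, machine~$i^\star(j)$ must have rejected~$j$. The single-machine rejection argument from~\cite{ChenEMSS2020} charges this rejection to a job in~$T_{i^\star(j)}^{\alg}$, and each such job receives at most~$\OO\big(\frac{1}{\eps}\big)$ charges. Summing~$|T_i^{\opt}| \leq |T_i^{\opt}\cap T^{\alg}| + |T_i^{\opt}\setminus T^{\alg}|$ over~$i$, and noting that~$\opt$ is non-migratory so that the first term sums to at most~$|T^{\alg}|$, yields~$|T^{\opt}| \leq \OO\big(\frac{1}{\eps}\big)\,|T^{\alg}|$. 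The matching~$\Omega\big(\frac{1}{\eps}\big)$ lower bound for a single machine from~\cite{ChenEMSS2020} immediately carries over to unrelated machines.

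The main obstacle, as I see it, is justifying that the local region structure on machine~$i$ truly behaves like a single-machine run on the sub-instance of jobs routed to~$i$. Since acceptance of~$j$ simultaneously consults the states of all eligible machines, one has to make sure that machine~$i$'s decision uses only information that it would have in isolation (remaining slack of jobs placed on~$i$, sizes of regions on~$i$, etc.) and is not accidentally coupled to jobs routed elsewhere. Once this decoupling is established, the charging described above is essentially a sum of independent single-machine arguments and gives the~$\OO\big(\frac{1}{\eps}\big)$ competitive ratio.
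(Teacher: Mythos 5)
Your per-machine decoupling observation is sound and matches the paper's design (the paper makes the machine choice deterministic by always picking the first machine in index order, which removes the ambiguity of ``pick an arbitrary such machine,'' but that is cosmetic). The genuine gap is in the charging step. You take a job~$j$ that $\opt$ completes on machine~$i^\star(j)$ but the algorithm does not complete, and assert that ``machine~$i^\star(j)$ must have rejected~$j$.'' This is false: the region algorithm admits jobs aggressively and may admit~$j$ to some other machine~$i'\neq i^\star(j)$ and then fail to finish it by~$d_j$. In that scenario,~$j$ was never rejected by~$i^\star(j)$ --- it simply stopped being available once admitted to~$i'$ --- so the single-machine rejection/charging argument gives you nothing to charge~$j$ against on~$i^\star(j)$. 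Your proposal implicitly treats ``not completed'' as ``not admitted,'' which is exactly the distinction the region algorithm's analysis is built around.

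The paper splits the argument into two pieces precisely to avoid this issue. First, it shows (Theorem~\ref{theo:otm:CompleteHalf}, via the reduction lemmas \Cref{lem:otm:RestrictJobs} and~\Cref{lem:otm:ReduceToOneMachine} and then Lemma~3 of~\cite{ChenEMSS2020}) that at least half of all \emph{admitted} jobs complete on time; this handles exactly the admitted-but-late jobs you skip. Second, it bounds $|\opt|$ not against completed jobs but against \emph{admitted} jobs~$|J|$, using the new threshold-charging machinery of Theorem~\ref{theo:otm:Charging}, which applies to any non-migratory online algorithm satisfying~\ref{enum:alg:available} and~\ref{enum:alg:threshold}. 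So the per-machine charging is not the single-machine argument from~\cite{ChenEMSS2020} run verbatim; it is a cleaner and more general lemma developed specifically to support both the region and blocking algorithms. To repair your proof you would either need to reproduce this two-step structure (bound completed against admitted, then admitted against $\opt$) or otherwise explicitly account for jobs admitted to some machine but abandoned late.
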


Besides presenting a simpler algorithm for throughput maximization without commitment, we show this result to present an additional application of our technical findings for the analysis of the blocking algorithm. We give details later. On a high level, we show a key lemma on the size of non-admitted jobs for a big class of online algorithms which results in an upper bound on the throughput of an optimal (offline) non-migratory algorithm. This key lemma can be used in the analysis of both algorithms, blocking and region. In fact, also the analysis of the original region algorithm for a single machine~\cite{ChenEMSS2020} becomes substantially easier.

In case of identical machines, again, we can apply the result by Kalyanasundaram and Pruhs~\cite{KalyanasundaramP01} that states that the throughput of an optimal migratory schedule is  larger by at most a constant factor than the throughput of an optimal non-migratory schedule. Thus, the result in \Cref{theo:otm:UB-no} holds also in a migratory setting when scheduling on identical machines.

\begin{corollary}
	A generalization of the region algorithm is~$\Theta\big(\frac1\eps\big)$-competitive for online throughput maximization on multiple identical machines without commitment requirements, with and without migration. 
\end{corollary}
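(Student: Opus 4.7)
The plan is to derive the corollary as a direct consequence of \Cref{theo:otm:UB-no} combined with a known relationship between migratory and non-migratory optima on identical machines, thus avoiding any new algorithmic work.

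Observe first that identical parallel machines are a special case of unrelated machines, obtained by setting $p_{ij}:=p_j$ for every eligible pair $(i,j)$. Hence, the generalized region algorithm and its analysis from \Cref{theo:otm:UB-no} apply verbatim and yield an $\OO\big(\frac1\eps\big)$-competitive non-migratory online algorithm whose benchmark is the optimal non-migratory offline schedule. This already settles the setting \emph{without} migration on identical machines.

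To extend the competitive guarantee to the \emph{migratory} benchmark, I would invoke the result of Kalyanasundaram and Pruhs~\cite{KalyanasundaramP01}, which states that on $m$ identical parallel machines the throughput of an optimal migratory schedule exceeds the throughput of an optimal non-migratory schedule by at most a factor of $\frac{6m-5}{m}<6$. Let $\opt^{\text{mig}}$ and $\opt^{\text{non-mig}}$ denote these two offline optima, and let $\alg$ be the schedule produced by the generalized region algorithm on the given instance. By \Cref{theo:otm:UB-no} there is a constant $c>0$ with $|\alg|\ge c\,\eps\cdot \opt^{\text{non-mig}}$, and by the bound of~\cite{KalyanasundaramP01} we have $\opt^{\text{non-mig}}\ge \tfrac{1}{6}\,\opt^{\text{mig}}$. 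Chaining the two inequalities yields $|\alg|\ge \tfrac{c\,\eps}{6}\,\opt^{\text{mig}}$, which is the claimed $\OO\big(\frac1\eps\big)$ competitive ratio against a migratory optimum. Since the algorithm itself is non-migratory, the same bound holds a fortiori if one compares it against the non-migratory optimum, so the result covers both settings simultaneously.

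For the matching lower bound $\Omega\big(\frac1\eps\big)$: the single-machine deterministic lower bound from~\cite{ChenEMSS2020} directly transfers, because a hard single-machine instance can be embedded into the $m$-machine environment by making only one machine eligible for every job (or, for the fully identical setting, replicating the instance independently across machines). On single-machine instances migration is vacuous, so the lower bound applies in both the migratory and the non-migratory variants. Combined with the upper bound above, this gives the $\Theta\big(\frac1\eps\big)$ guarantee asserted by the corollary.

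The only subtle point in this plan -- and thus the main thing worth spelling out carefully -- is that the reduction via~\cite{KalyanasundaramP01} is used purely as a black-box comparison between two offline optima, so no change to the online algorithm is required; in particular, the algorithm remains non-migratory while the analysis absorbs the constant factor $\tfrac{6m-5}{m}$ into the hidden constant of the $\OO(\cdot)$-notation.
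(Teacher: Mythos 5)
Your proof is correct and follows the same route as the paper: specialize \Cref{theo:otm:UB-no} to identical machines (a special case of unrelated machines), and invoke Kalyanasundaram and Pruhs~\cite{KalyanasundaramP01} to transfer the non-migratory guarantee to the migratory benchmark at the cost of a constant factor $\frac{6m-5}{m}<6$. Your additional remarks on embedding the single-machine lower bound are a reasonable elaboration of the $\Omega\big(\frac1\eps\big)$ side, which the paper leaves implicit.
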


\subsection*{Outline of the paper}  In \Cref{sec:blocking}, we describe and outline the analysis of our new non-migratory algorithm. It consists of two parts, which are detailed in \Cref{sec:com:CompleteAll,sec:otm:AdmitMany}: firstly, we argue that all jobs admitted by our algorithm can complete by their deadline and, secondly, we prove that we admit ``sufficiently many'' jobs. In \Cref{sec:RegionAlg}, we generalize the known region algorithm, developed for a single machine in our prior work~\cite{ChenEMSS2020}, to a non-migratory algorithm without commitment on unrelated machines. We show how to apply a new key technique developed for the analysis in \Cref{sec:otm:AdmitMany} to analyze it and prove the same competitive ratio (up to constant factors) as for a single machine.

\section{The blocking algorithm}
\label{sec:blocking}

In this section, we describe the \emph{blocking algorithm} for scheduling with commitment. We assume that the slackness constant $\eps >0$ and, in the $\delta$-commitment model, $\delta \in (0,\eps)$ are given. If $\delta$ is not part of the input or if~$\delta \leq \frac\eps2$, then we set $\delta = \frac{\eps}{2}$.

The algorithm never migrates jobs between machines, i.e., a job is only processed by the machine that initially started to process it. In this case, we say the job has been  \emph{admitted} to this machine. Moreover, our algorithm commits to completing a job upon admission (even in the~$\delta$-commitment model). Hence, its remaining slack has to be spent very carefully on admitting other jobs  to still be competitive. As our algorithm does not migrate jobs, it transfers the admission decision to the shortest admitted and not yet completed job on each machine. A job only admits significantly shorter jobs and prevents the admission of too many jobs of similar size. To this end, the algorithm maintains two types of intervals for each admitted job, a \emph{scheduling interval} and a \emph{blocking period}. A job can only be processed in its scheduling interval. Thus, it has to complete in this interval while admitting other jobs. Job~$j$ only admits jobs that are smaller by a factor of at least~$\gamma =\tfrac{\delta}{16} < 1$. For an admitted job~$k$, job~$j$ creates a blocking period of length at most~$\beta p_{ik}$, where~$\beta = \tfrac{16}{\delta}$, which blocks the admission of similar-length jobs (cf. Figure~\ref{fig:atree}). The scheduling intervals and blocking periods of jobs admitted by~$j$ will always be pairwise disjoint and completely contained in the scheduling interval of~$j$.

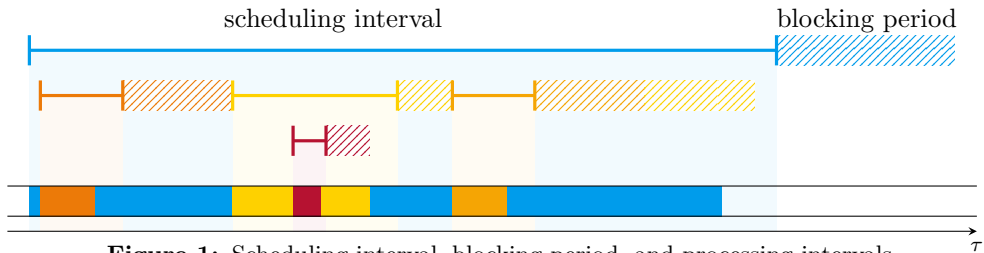
\begin{figure}[tbh]
	\centering%
	\begin{tikzpicture}[x=.85cm]
	
	\def \xscale {.85}

	\blockinterval{14.66}{14*\xscale}{2.2}{2.6}{utblue}
	\schedinterval{.4*\xscale}{14*\xscale}{2.4}{0}{utblue}
	
	\foreach \x/\col/\lb in {2.1/utorange/2, 7.1/utyellow/1, 9.6/utyellow!50!utorange/2, 11.6/utyellow/2}{
		\blockinterval{\x*\xscale}{\x*\xscale + \lb*\xscale}{1.6}{2}{\col}
	}

	\foreach \x/\col/\ls in {.6/utorange/1.5,  4.1/utyellow/3, 8.1/utyellow!50!utorange/1.5}{
		\schedinterval{\x*\xscale}{\x*\xscale+\ls*\xscale}{1.8}{0}{\col}
	}
	
	\blockinterval{5.8*\xscale}{6.6*\xscale}{1.0}{1.4}{ubred}
	\schedinterval{5.2*\xscale}{5.8*\xscale}{1.2}{0}{ubred}

	\foreach \x/\y/\col in {.4/13/utblue}{
		\schedjob{\x*\xscale}{\y*\xscale}{.2}{.6}{\col}	
	}
	
	\foreach \x/\y/\col in {.6/1.6/utorange, 4.1/6.6/utyellow, 8.1/9.1/utyellow!50!utorange}{
		\schedjob{\x*\xscale}{\y*\xscale}{.2}{.6}{\col}
	}

	\foreach \x/\y/\col in {5.2/5.7/ubred}{
		\schedjob{\x*\xscale}{\y*\xscale}{.2}{.6}{\col}	
	}
	
	\node[black, inner sep = 0pt, font = \small] at (5.04, 2.8) {scheduling interval};
	\node[black, inner sep = 0pt, font = \small] at (13.3, 2.8) {blocking period};

	\draw[-stealth] (0,0) to (15,0) node [below,font=\scriptsize] {$\tau$};
	\draw[] (0,.2) to (15,.2);
	\draw[] (0,.6) to (15,.6);

	\end{tikzpicture}%
	\vspace{-.6cm}%
	\caption{Scheduling interval, blocking period, and processing intervals}\label{fig:atree}	
\end{figure}

\paragraph{Scheduling jobs.} Independent of the admission scheme, the blocking algorithm follows the \textsc{Shortest Processing Time} (SPT) order for the set of uncompleted jobs assigned to a machine.
SPT ensures that a job~$j$ has highest priority in the blocking periods of any job~$k$ admitted by~$j$.

\paragraph{Admitting jobs.} The algorithm keeps track of \emph{available} jobs at any time point~$\tau$. A job~$j$ with $r_j \leq \tau$ is called available for machine~$i$ if it has not yet been admitted to a machine by the algorithm and its deadline is not too close, i.e., $d_j - \tau \geq (1+\delta) p_{ij}$. 

Whenever a job~$j$ is available for machine~$i$ at a time~$\tau$ 
such that time~$\tau$ is not contained in the scheduling interval of any other job admitted to~$i$, the shortest such job~$j$ is immediately admitted to machine~$i$ at time~$a_j := \tau$, creating the scheduling interval $S(j) = [a_j, e_j)$, where~$e_j = a_j + (1+\delta)p_{ij}$ and an empty blocking period~$B(j) = \emptyset$. In general, however, the blocking period of a job $j$ is a finite union of time intervals associated with~$j$, and its size is the sum of lengths of the intervals, denoted by~$|B(j)|$. Both, blocking period and scheduling interval, depend on machine $i$ but we omit~$i$ from the notation as it is clear from the context; both periods are created after job~$j$ has been assigned to machine~$i$.

Four types of events trigger a decision of the
algorithm at time $\tau$: the release of a job, the end of a blocking period, the end of a scheduling interval, and the admission of a job. In any of these four cases, the algorithm calls the {\em admission routine}.
This subroutine iterates over all machines~$i$ and checks if~$j$, the shortest job on~$i$ whose scheduling interval contains~$\tau$, can admit the currently shortest job~$\newjob$ available for machine~$i$.

To this end, any admitted job~$j$ checks whether~$p_{i\newjob} < \gamma p_{ij}$. Only such jobs qualify for admission by~$j$. Upon admission by~$j$, job~$\newjob$ obtains two disjoint consecutive intervals, the \emph{scheduling interval} $S(\newjob) = [a_{\newjob}, e_{\newjob})$ and the \emph{blocking period} $B(\newjob)$ of size at most~$\beta p_{i\newjob}$. At the admission of job~$\newjob$, the blocking period~$B(\newjob)$ is planned to start at~$e_{\newjob}$, the end of~$\newjob$'s scheduling interval. 
During $B(\newjob)$, job~$j$ only admits jobs~$k$ with~$p_{ik} < \frac12 p_{i\newjob}$. 

Hence, when job~$j$ decides if it admits the currently shortest available job~$\newjob$ at time~$\tau$, it makes sure that~$\newjob$ is sufficiently small and that no job~$k$ of similar (or even smaller) processing time is blocking~$\tau$, i.e., it verifies that~$\tau \notin B(k)$ for all jobs~$k$ with~$p_{ik} \leq 2 p_{i\newjob}$ admitted to the same machine. 
In this case, we say that~$\newjob$ is a \emph{child} of~$j$ and that~$j$ is the \emph{parent} of~$\newjob$, denoted by $\pi(\newjob) = j$. 
If job~$\newjob$ is admitted at time~$\tau$ by job~$j$, the algorithm sets~$a_{\newjob} =\tau$ and~$e_{\newjob} = a_{\newjob} + (1+\delta) p_{i\newjob}$ and assigns the scheduling interval~$S(\newjob) = [a_{\newjob}, e_{\newjob})$ to~${\newjob}$. 

If~$e_{\newjob} \leq e_j$, the routine sets~$f_{\newjob} = \min\{e_j, e_{\newjob} + \beta p_{i\newjob}\}$ which
determines~$B(\newjob) = [e_{\newjob}, f_{\newjob})$. As the scheduling and blocking periods of children~$k$ of~$j$ are supposed to be disjoint, 
we have to \textit{update the blocking periods}. First consider the job~$k$ with~$p_{ik} > 2 p_{i\newjob}$ admitted to the same machine whose blocking period contains~$\tau$ (if it exists), and let~$[e_{k}', f_{k}')$ be the maximal interval of~$B(k)$ containing~$\tau$. We set $f_{k}''= \min \{e_j, f_{k}' + (1+\delta + \beta)p_{i\newjob}\}$ and replace the interval~$[e_{k}', f_{k}')$  by $[e_{k} ', \tau) \cup [\tau+(1+\delta + \beta) p_{i\newjob}, f_{k}'')$. For all other jobs~$k$ with~$B(k) \cap [\tau,\infty) \neq \emptyset$ admitted to the same machine, we replace the remaining part of their blocking period~$[e_k', f_k')$ by~$[e_k'+(1+\delta+\beta)p_{i\newjob}, f_k'')$ where~$f_k'' := \min\{e_j, f_k'+(1+\delta+\beta)p_{i\newjob} \} $. In this update, we follow the convention that $[e,f) = \emptyset$ if $f \leq e$. Observe that the length of the blocking period might decrease due to such updates. 

Note that~$e_{\newjob} > e_j$ is also possible as~$j$ does not take the end of its own scheduling interval~$e_j$ into account when admitting jobs. Thus, the scheduling interval of~$\newjob$ would end outside the scheduling interval of~$j$ and inside the blocking period of~$j$. During $B(j)$, the parent~$\pi(j)$ of~$j$, did not allocate the interval~$[e_j,e_{\newjob})$ for completing jobs admitted by~$j$ but for ensuring its own completion. Hence, the completion of both~$\newjob$ and~$\pi(j)$ is not necessarily guaranteed anymore. To prevent this, we \textit{modify all scheduling intervals}~$S(k)$ (including~$S(j)$) that contain time~$\tau$ of jobs admitted to the same machine as~$\newjob$ and their blocking periods~$B(k)$. For each job~$k$ admitted to the same machine with~$\tau\in S(k)$ (including~$j$) and~$e_{\newjob} > e_k$, we set~$e_k = e_{\newjob}$. We also update their blocking periods (in fact, single intervals) to reflect their new starting points. If the parent~$\pi(k)$ of~$k$ does not exist,~$B(k)$ remains empty; otherwise we set~$B(k) := [e_k, f_k)$ where~$f_k = \min\{e_{\pi(k)}, e_k + \beta p_{ik}\}$. Note that, after this update, the blocking periods of any but the largest such job will be empty. Moreover, the just admitted job~$\newjob$ does not get a blocking period in this special case. 

During the analysis of the algorithm, we show that any admitted job~$j$ still completes before~$a_j + (1+\delta)p_{ij}$ and that~$e_j \leq a_j + (1+2\delta)p_{ij}$ holds in retrospect for all admitted jobs~$j$. Thus, any job~$j$ that admits another job~$\newjob$ tentatively assigns this job a scheduling interval of length $(1+\delta)p_{i\newjob}$ but, for ensuring its own completion, it is prepared to lose~$(1+2\delta)p_{i\newjob}$ time units of its scheduling interval~$S(j)$. 
We summarize the blocking algorithm in Figure~\ref{alg:BlockAlg}.

\begin{figure}
\caption{Blocking algorithm}
\label{alg:BlockAlg}
\begin{algorithmic}
\STATE{\textbf{Scheduling Routine:} At all times $\tau$ and on all machines $i$, run the job with shortest processing time that has been admitted to $i$ and has not yet completed. \smallskip}
\STATE{\textbf{Event:} Release of a new job at time $\tau$}
\STATE{\hspace{1em}Call Admission Routine.\smallskip} 	
\STATE{\textbf{Event:} End of a blocking period or scheduling interval at time $\tau$}
\STATE{\hspace{1em}Call Admission Routine.\smallskip} 
\STATE{\textbf{Admission Routine:}}
\STATE{$i \assign 1$ }
\STATE{$\newjob \assign$ a shortest job available at $\tau$ for machine $i$, i.e., $\newjob \in \arg\min\{p_{ij} \,|\, j \in \jobs, r_j \leq \tau \text{ and } d_j - \tau \geq (1 + \delta) p_{ij}\}$}
\WHILE{$i \leq m$}
	\STATE{$K$ $\assign$ the set of jobs on machine $i$ whose scheduling intervals contain $\tau$ }
	\IF{$K = \emptyset$}
		\STATE{admit job $\newjob$ to machine $i$, $a_{\newjob} \leftarrow \tau$, and $e_{\newjob} \leftarrow a_{\newjob} + (1+\delta) p_{i\newjob}$ }
		\STATE{$S({\newjob})\leftarrow[a_{\newjob},e_{\newjob})$ and $B(\newjob) \leftarrow \emptyset$}
		\STATE{call Admission Routine}
	\ELSE
		\STATE{$j \assign \arg\min\{p_{ik} \,|\, k \in K\}$}
		\IF{$\newjob < \gamma p_{ij}$ \AND $\tau \notin B(j')$ for all $j'$ admitted to $i$ with $ p_{ij'} \leq 2 p_{i\newjob}$}
			\STATE{admit job $\newjob$ to machine $i$, $a_{\newjob} \leftarrow \tau$, and $e_{\newjob} \leftarrow a_{\newjob} + (1+\delta) p_{i\newjob}$ }
			\IF{$e_{\newjob} \leq e_j$}
				\STATE{$f_{\newjob} \leftarrow \min\{e_j, e_{\newjob} + \beta p_{i\newjob}\}$}
				\STATE{$S({\newjob})\leftarrow[a_{\newjob},e_{\newjob})$ and $B(\newjob) \leftarrow [e_{\newjob}, f_{\newjob})$}
			\ELSE
				\STATE{$S({\newjob})\leftarrow[a_{\newjob},e_{\newjob})$ and $B(\newjob) \leftarrow \emptyset$}
				\STATE{modify $S(k)$ and $B(k)$ for $k \in K$}
				\STATE{update $B(j')$ for $j'$ admitted to machine $i$ with $B(j') \cap [\tau,\infty) \neq \emptyset$}
				\STATE{call Admission Routine}
			\ENDIF
		\ELSE
			\STATE{$i \assign i+1$}
			\STATE{$\newjob \assign$ a shortest job available at $\tau$ for machine $i$, i.e., $\newjob \in \arg\min\{p_{ij} \,|\, j \in \jobs, r_j \leq \tau \text{ and } d_j - \tau \geq (1 + \delta) p_{ij}\}$}
		\ENDIF
	\ENDIF
\ENDWHILE
\end{algorithmic}
\end{figure}

\subsection*{Roadmap for the analysis}

During the analysis, 
it is sufficient to concentrate on instances with small slack, as also noted in \cite{ChenEMSS2020}. For $\eps > 1$ we run the blocking algorithm with $\eps = 1$, which only tightens the commitment requirement, and obtain constant competitive ratios. Thus, we assume $0 < \eps \leq 1$. For $0 < \delta < \eps$, in the~$\delta$-commitment model an online scheduler needs to commit to the completion of a job~$j$ no later than~$d_j - (1+\delta)p_{ij}$. Hence,  committing to the completion of a job~$j$ at an earlier point in time clearly satisfies committing at a remaining slack of~$\delta p_{ij}$. Therefore, we may assume $\delta \in [\frac\eps2, \eps)$. 

The blocking algorithm does not migrate any job. In the analysis, we compare the throughput of our 
algorithm to the solution of an optimal non-migratory schedule. To do so, we rely on a key design principle of the blocking algorithm, which is that, whenever the job set admitted to a machine is fixed, the scheduling of the jobs follows the simple SPT order. This enables us to split the analysis into two parts. 

In the first part, we argue that the scheduling routine can handle the admitted jobs sufficiently well. That is, every admitted jobs is completed on time; see \Cref{sec:com:CompleteAll}. Here, we use that the blocking algorithm is non-migratory and consider each machine individually.

For the second part, we observe that the potential admission of a new job~$\newjob$ to machine~$i$ is solely based on its availability and on its size relative to~$j_i$, the job currently processed by machine~$i$. More precisely, given the availability of~$\newjob$, if~$p_{i\newjob} < \gamma p_{ij_i}$, the time does not belong to the blocking period of a job~$k_i$ admitted to machine~$i$ with~$p_{i\newjob} \geq \frac 12 p_{ik_i}$ and~$i$ is the first machine (according to machine indices) with this property, then~$\newjob$ is admitted to machine~$i$. This implies that~$\min \big\{ \gamma p_{ij_i},  \frac12 p_{ik_i}\big\} $ acts as a \emph{threshold}, and only available jobs with processing time less than this threshold qualify for admission by the blocking algorithm on machine~$i$. Hence, any available job that the blocking algorithm does not admit has to exceed the threshold. 

Based on this observation, we develop a general charging scheme for \emph{any} non-migratory online algorithm satisfying the property that, at any time~$\tau$, the algorithm maintains a time-dependent threshold and the shortest available job smaller than this threshold is admitted by the algorithm. We formalize this description and analyze the competitive ratio of such algorithms in \Cref{sec:otm:AdmitMany} before applying this general result to our particular algorithm. 

\section{Completing all admitted jobs on time}\label{sec:com:CompleteAll}

We show that the blocking algorithm finishes every admitted job on time in \Cref{theo:com:CompleteAll}. 

\begin{theorem}\label{theo:com:CompleteAll}
	Let $0<\delta < \eps$ be fixed. If $0 < \gamma < 1$ and $\beta \geq 1$ satisfy 
	\begin{equation}\label{eq:CompleteAll}
	\frac{\beta/2}{\beta/2 + (1+2\delta)}\left( 1 + \delta - 2(1+2\delta)\gamma\right) \geq 1,
	\end{equation}  	
	then the blocking algorithm completes any job~$j$ admitted at~$a_j\leq d_j - (1+\delta)p_{ij}$ on time.
\end{theorem}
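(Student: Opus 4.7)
The plan is to prove the claim by strong induction on $p_{ij}$: assume that every admitted job on its machine with strictly smaller processing time completes within its planned window, and establish the same for $j$. Let $i$ be the machine $j$ is admitted to. Because the scheduling routine is SPT, $j$ can only be preempted on machine~$i$ by admitted jobs of strictly smaller size; hence $j$ will complete by $a_j + (1+\delta)p_{ij} \le d_j$ as soon as the total processing of such shorter jobs inside $[a_j,\, a_j + (1+\delta)p_{ij})$ is bounded by $\delta\, p_{ij}$.

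I would first establish a \emph{structural step}: every shorter admitted job on machine~$i$ whose execution falls inside $[a_j, e_j)$ is a descendant of $j$ in the admission tree on machine~$i$. This follows from the admission rule, which picks as parent the shortest admitted job whose scheduling interval contains the admission time, together with the fact that two admissions never occur inside the same scheduling interval, which makes the scheduling intervals on machine~$i$ into a laminar family. By the inductive hypothesis applied to every direct child $k$ of $j$, the subtree rooted at $k$ performs at most $(1+\delta)p_{ik}$ units of machine-$i$ processing, all contained inside $S(k) \subseteq S(j)$, so the target becomes
\[
(1+\delta)\sum_{k \text{ direct child of } j} p_{ik} \ \le\ \delta\, p_{ij}.
\]

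The heart of the proof is the \emph{counting step}. I would partition the direct children of $j$ into doubling size classes and use the blocking rule, specifically the check ``$\tau \notin B(k')$ for every $k'$ with $p_{ik'} \le 2p_{i\newjob}$'', to conclude that the combined footprints $S(k)\cup B(k)$ of any two same-class direct children of $j$ are disjoint inside $S(j)$. Each such footprint has length at least $\bigl(\tfrac{\beta}{2} + (1+2\delta)\bigr) p_{ik}$: the term $(1+2\delta)$ bounds the possibly extended $|S(k)|$, and $\tfrac{\beta}{2}$ lower-bounds the effective blocking length after the worst-case truncations caused by the endpoint $e_j$ or by updates triggered by admissions of smaller jobs. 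A boundary loss of at most $2(1+2\delta)\gamma\, p_{ij}$ is charged to the two endpoints of the analysis window, since $p_{ik} \le \gamma\, p_{ij}$ implies that each sibling's footprint has width at most $(1+2\delta)\gamma\, p_{ij}$. Combining disjointness, the footprint lower bound, and this boundary loss yields
\[
\Bigl(\tfrac{\beta}{2} + (1+2\delta)\Bigr)\sum_{k} p_{ik} \ \le\ \bigl(1 + \delta - 2(1+2\delta)\gamma\bigr)\, p_{ij},
\]
and multiplying by $\tfrac{\beta/2}{\beta/2 + (1+2\delta)}$ and invoking~\eqref{eq:CompleteAll} rearranges this to the desired bound on $\sum p_{ik}$, closing the induction.

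The principal obstacle is the careful treatment of the \emph{extension case}, in which a child $\newjob$ is admitted with $e_{\newjob} > e_j$ and the algorithm rewrites all active scheduling intervals and blocking periods on machine~$i$. One must verify that after these updates (i) the footprints of direct children of $j$ remain pairwise disjoint inside the updated $S(j)$, (ii) no blocking period is truncated below $\tfrac{\beta}{2}\, p_{ik}$, and (iii) the inductive subtree bound $(1+\delta)p_{ik}$ is still applicable to the child that triggered the extension. The factors $(1+2\delta)$ and $-2(1+2\delta)\gamma$ in~\eqref{eq:CompleteAll} are calibrated precisely to absorb these worst-case boundary effects.
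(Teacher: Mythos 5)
Your overall strategy departs from the paper's, and as written it has three gaps that I don't see how to close.

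First, the paper does \emph{not} argue by induction on $p_{ij}$. It observes that, under SPT, job $j$ is processed at \emph{every} instant of $[a_j, d_j')$ that lies outside $\bigcup_{k\in K} S(k)$, where $K$ is the set of children of $j$; no completion hypothesis about the children is needed. It then shows directly that this complement has measure at least $p_{ij}$ by splitting $K$ into $K_1$ (the first child admitted in each doubling class $\C_c(j)$) and $K_2$ (the rest): the $K_1$ loss is crudely bounded by $\sum_c (1+2\delta)\,\gamma p_{ij}/2^{c} = 2(1+2\delta)\gamma p_{ij}$, and for each $k\in K_2$ the window $B(k')\cup S(k)$ (with $k'$ the preceding same-class child) has a useful fraction of at least $\frac{\beta/2}{\beta/2+(1+2\delta)}$, because $j$ runs throughout $B(k')$. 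Multiplying the fraction by $((1+\delta) - 2(1+2\delta)\gamma)p_{ij}$ and invoking \eqref{eq:CompleteAll} then lower-bounds $j$'s processing by $p_{ij}$.

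Second, your central counting inequality $(\tfrac{\beta}{2}+(1+2\delta))\sum_k p_{ik}\le(1+\delta-2(1+2\delta)\gamma)p_{ij}$ does not follow from the disjointness you actually establish. You only show that \emph{same-class} footprints $S(k)\cup B(k)$ are disjoint; footprints from different classes routinely overlap, since a much smaller job is explicitly allowed to be admitted inside the blocking period of a larger one. Hence summing footprint lengths over \emph{all} $k$ need not be bounded by the length of $S(j)$ (minus boundary). Moreover, the footprint lower bound $|S(k)\cup B(k)|\ge(\tfrac{\beta}{2}+(1+2\delta))p_{ik}$ fails for children whose blocking period is truncated at $e_j$ or emptied in the extension case ($e_\ihat>e_j$), where $B(k)$ may be $\emptyset$; the paper only needs a lower bound on $|B(k')|$ when $k'\in\C_c(j)$ is followed by another same-class admission, in which case $B(k')$ runs its full length $\beta p_{ik'}$ and the ratio $\nu=p_{ik'}/p_{ik}>1/2$ is what produces the $\beta/2$.

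Third, even granting your inequality, the last step is reversed. Condition \eqref{eq:CompleteAll} gives a \emph{lower} bound $\frac{\beta/2}{\beta/2+(1+2\delta)}\bigl(1+\delta-2(1+2\delta)\gamma\bigr)\ge 1$, i.e.\ a lower bound on $\frac{1+\delta-2(1+2\delta)\gamma}{\beta/2+(1+2\delta)}$, whereas your plan to conclude $\sum_k p_{ik}\le\frac{\delta}{1+\delta}p_{ij}$ requires an \emph{upper} bound on that ratio. As written the inequality chain does not close. The paper avoids this by using \eqref{eq:CompleteAll} to lower-bound the amount of time $j$ is processed, not to upper-bound $\sum_k p_{ik}$.
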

Recall that we chose $\gamma = \frac \delta {16}$ and $\beta = \frac{16}\delta$, which guarantees that Equation~\eqref{eq:CompleteAll} is satisfied. 

As the blocking algorithm does not migrate jobs, it suffices to consider each machine individually in this section. 
The proof relies on the following observations: (i) The sizes of jobs admitted by job~$j$ that interrupt each others' blocking periods are geometrically decreasing, (ii) the scheduling intervals of jobs are completely contained in the scheduling intervals of their parents, and (iii) scheduling in SPT order guarantees that job~$j$ has highest priority in the blocking periods of its children. We start by proving the following technical lemma about the length of the final scheduling interval of an admitted job~$j$, denoted by~$|S(j)|$. In the proof, we use that~$\pi(k)=j$ for two jobs~$j$ and~$k$ implies that~$p_{ik} < \gamma p_{ij}$. 

\begin{lemma} \label{lem:com:LengthSchedulingInterval}
	Let $0<\delta < \eps$ be fixed. If $\gamma > 0$ satisfies $(1+2\delta)\gamma \leq \delta,$
	then~$|S(j)| \leq (1+2\delta)p_{ij}$. Moreover, $S(j)$ contains the scheduling intervals and blocking periods of all descendants of~$j$. 
\end{lemma}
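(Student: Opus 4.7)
My plan is to establish both claims together by a direct structural analysis of the admission tree rooted at $j$, focusing on the time-ordered sequence of descendant admissions that extend $e_j$.

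For the containment claim, I would maintain as an invariant that $a_k \ge a_j$ and $e_k \le e_j$ for every descendant $k$ of $j$, arguing by induction on extension events in chronological order. The first inequality holds trivially by the order of admission. The second is preserved by the algorithm's propagation rule: whenever $e_k$ is extended at some time $\tau$ (triggered by the admission of a descendant $k'$ of $k$), the invariant gives $\tau \in S(k) \subseteq S(j)$, so the rule that sets $e_{k''} = e_{k'}$ for all $k'' \in K$ with $e_{k'} > e_{k''}$ simultaneously updates $e_j$ to at least the new value of $e_k$. The containment of blocking periods then follows because the algorithm always defines $f_k \le e_{\pi(k)}$ (in both the initial definition and every subsequent update), giving $B(k) \subseteq [e_k, e_{\pi(k)}) \subseteq S(\pi(k))$; iterating this inclusion up through the ancestors of $k$ yields $B(k) \subseteq S(j)$.

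The heart of the proof, and the main obstacle, is bounding $|S(j)|$. I would enumerate the descendants of $j$ whose admission actually extends $e_j$, in chronological order, as $k^{(1)}, \dots, k^{(N)}$ with admission times $\tau^{(1)} < \dots < \tau^{(N)}$; each such extension sets $e_j$ to $E_l := \tau^{(l)} + (1+\delta)\, p_{ik^{(l)}}$. Since $\tau^{(l)} < E_{l-1}$, where $E_0 := a_j + (1+\delta)p_{ij}$, the growth of $e_j$ at step $l$ is strictly less than $(1+\delta)p_{ik^{(l)}}$. The crucial step is to show that consecutive extenders shrink geometrically, namely $p_{ik^{(l+1)}} < \gamma \cdot p_{ik^{(l)}}$: immediately after $\tau^{(l)}$ one has $e_j = E_l = e_{k^{(l)}}$, so any later extension of $e_{k^{(l)}}$ would itself propagate to $e_j$; as no such propagation occurs strictly between $\tau^{(l)}$ and $\tau^{(l+1)}$, $e_{k^{(l)}}$ still equals $E_l$ at time $\tau^{(l+1)}$, and together with $\tau^{(l)} < \tau^{(l+1)} < E_l$ this forces $\tau^{(l+1)} \in S(k^{(l)})$. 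Hence $k^{(l)} \in K^{(l+1)}$, so the admitter of $k^{(l+1)}$ (the shortest job in $K^{(l+1)}$) has processing time at most $p_{ik^{(l)}}$, and the admission criterion gives the claimed factor-$\gamma$ shrinkage. The same argument applied with $k^{(0)} := j$ gives $p_{ik^{(1)}} < \gamma p_{ij}$.

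Summing the resulting geometric series, the total growth of $e_j$ is strictly less than $(1+\delta)p_{ij} \sum_{l \ge 1} \gamma^l = \frac{(1+\delta)\gamma}{1-\gamma}\, p_{ij}$, which is at most $\delta p_{ij}$ precisely because the hypothesis $(1+2\delta)\gamma \le \delta$ rearranges to $(1+\delta)\gamma \le \delta(1-\gamma)$. This yields $|S(j)| \le (1+\delta)p_{ij} + \delta p_{ij} = (1+2\delta)p_{ij}$ and completes the proof.
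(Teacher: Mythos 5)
Your proof is correct, but it takes a genuinely different route from the paper's. The paper argues by induction over the admission tree: the key observation there is that once $e_j$ is first extended, $j$ is no longer the smallest job whose scheduling interval contains the current time, so $j$ can never admit another job afterwards; hence every child $k$ of $j$ satisfies $a_k \leq a_j + (1+\delta)p_{ij}$, and combining this with the inductive hypothesis $e_k \leq a_k + (1+2\delta)p_{ik}$ and $p_{ik} < \gamma p_{ij}$ gives $e_j \leq a_j + (1+2\delta)p_{ij}$ in one step, with the same constant $(1+2\delta)$ at every level of the tree. You instead track the chronological chain $k^{(1)}, k^{(2)}, \dots$ of admission events that actually push $e_j$ outward, prove the factor-$\gamma$ shrinkage between consecutive extenders (by showing $k^{(l)}$ remains in $K^{(l+1)}$), and sum the resulting geometric series, using the rearrangement $(1+2\delta)\gamma \leq \delta \iff \frac{(1+\delta)\gamma}{1-\gamma} \leq \delta$. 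The paper's version is a cleaner induction precisely because the bound is depth-invariant; your version makes the geometric decay explicit and sidesteps the ``$j$ cannot admit after the first extension'' observation, at the cost of needing the containment invariant $S(k) \subseteq S(j)$ up front to argue that extensions of $e_{k^{(l)}}$ propagate to $e_j$ and that $\tau^{(l+1)} \in S(k^{(l)})$. Both organize the same underlying telescoping geometric bound, and both are valid.
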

\renewcommand{\newjob}{k}
\begin{proof}
	Consider a machine~$i$ and let~$j$ be a job admitted to machine~$i$.
	By definition of the blocking algorithm, the end point~$e_j$ of the scheduling interval of job~$j$ is only modified when~$j$ or one of~$j$'s descendants admits another job. Let us consider such a case: If job~$j$ admits a job~$\newjob$ whose scheduling interval does not fit into the scheduling interval of~$j$, we set~$e_j = e_{\newjob}= a_{\newjob} + (1+\delta)p_{i\newjob}$ to accommodate the scheduling interval~$S(\newjob)$ within~$S(j)$. The same modification is applied to any ancestor~${j'}$ of~$j$ with $e_{j'} < e_{\newjob}$. This implies that, after such a modification of the scheduling interval, neither~$j$ nor any affected ancestor~${j'}$ of~$j$ are the smallest jobs in their scheduling intervals anymore. In particular, no job whose scheduling interval was modified in such a case at time~$\tau$ is able to admit jobs after~$\tau$. Hence, any job~$j$ can only admit other jobs within the interval $[a_j, a_j + (1+\delta)p_{ij})$. That is, $a_{\newjob} \leq a_j + (1+\delta)p_{ij}$ for every job~$\newjob$ with~$\pi(\newjob) = j$.
	
	Thus, by induction, it is sufficient to show that~$a_{\newjob} + (1+2\delta)p_{i\newjob} \leq a_j + (1+2\delta)p_{ij}$ for admitted jobs~$\newjob$ and~$j$ with~$\pi(\newjob) = j$. Note that~$\pi(\newjob) = j$ implies $p_{i\newjob} < \gamma p_{ij}$. Hence, 
	\[
	a_{\newjob} + (1+2\delta) p_{i\newjob} \leq (a_j + (1+\delta)p_{ij}) + (1+2\delta)\gamma p_{ij} \leq a_j + (1+2\delta)p_{ij}, 
	\] 
	where the last inequality follows from the assumption $(1+2\delta) \gamma \leq \delta$. Due to the construction of~$B(k)$ upon admission of some job~$k$ by job~$j$, we also have~$B(k) \subseteq S(j)$. 
\end{proof}
\renewcommand{\newjob}{\ensuremath{j^\star}}

\begin{proof}[Proof of \Cref{theo:com:CompleteAll}]
	Let~$j$ be a job admitted by the blocking algorithm to machine~$i$ with $a_j\leq d_j - (1+\delta)p_{ij}$. Showing that job~$j$ completes before time~$d_j' := a_j + (1+\delta)p_{ij}$ proves the theorem. Due to scheduling in SPT order, each job~$j$ has highest priority in its own scheduling interval if the time point does not belong to the scheduling interval of a descendant of~$j$. Thus, it suffices to show that at most~$\delta p_{ij}$ units of time in $[a_j, d'_j)$ belong to scheduling intervals~$S(k)$ of descendants of~$j$. By \Cref{lem:com:LengthSchedulingInterval}, the scheduling interval of any descendant~$k'$ of a child~$k$ of~$j$ is contained in~$S(k)$. Hence, it is sufficient to only consider~$K$, the set of children of~$j$.
	
	In order to bound the contribution of each child~$k\in K$, we impose a \emph{class structure} on the jobs in~$K$ depending on their size relative to job~$j$. 
	More precisely, we define~$(\C_c(j))_{c\in\N_0}$, where~$\C_c(j)$ contains all jobs~$k \in K$ that satisfy $\frac{\gamma}{2^{c+1}} p_{ij} \leq p_{ik} < \frac{\gamma}{2^c} p_{ij}$. As~$k \in K$ implies~$p_{ik} < \gamma p_{ij}$, each child of~$j$ belongs to exactly one class and~$(\C_c(j))_{c\in\N_0}$ in fact partitions~$K$.  
	
	Consider two jobs~$k,k'\in K$ where, upon admission,~$k$ interrupts the blocking period of~$k'$. By definition, we have~$p_{ik} < \frac12 p_{ik'}$. Hence, the chosen class structure ensures that~$k$ belongs to a strictly \emph{higher} class than~$k'$, i.e., there are~$c,c' \in \N$ with~$c>c'$ such that~$k \in \C_c(j)$ and~$k'\in \C_{c'}(j)$. In particular, the admission of a job~$k \in \C_c(j)$ implies either that~$k$ is the first job of class~$\C_c(j)$ that~$j$ admits or that the blocking period of the previous job in class~$\C_c(j)$ has completed. Based on this distinction, we are able to bound the loss of scheduling time for~$j$ in~$S(j)$ due to~$S(k)$ of a child~$k$.	
	Specifically, we partition~$K$ into two sets. The first set~$K_1$ contains all children of~$j$ that where admitted as the first jobs in their class~$\C_c(j)$. The set~$K_2$ contains the remaining jobs.
	
	We start with $K_2$. Consider a job~$k \in \C_c(j)$ admitted by~$j$. By \Cref{lem:com:LengthSchedulingInterval}, we know that~$|S(k)| = (1+\mu \delta)p_{ik}$, where $1\leq \mu \leq 2$.  Let~$k' \in \C_c(j)$ be the previous job admitted by~$j$ in class~$\C_c(j)$. Then, $B(k') \subseteq [e_{k'}, a_{k})$. Since scheduling and blocking periods of children of~$j$ are disjoint,~$j$ has highest scheduling priority in~$B(k')$. Hence, during $B(k') \cup S(k)$ job~$j$ is processed for at least~$|B(k')|$ units of time. In other words,~$j$ is processed for at least a~$\frac{|B(k')|}{|B(k') \cup S(k)|}$-fraction of $B(k') \cup S(k)$. We rewrite this ratio as 
	\begin{equation*}
	\frac{|B(k')|}{|B(k') \cup S(k)|} = \frac{\beta p_{ik'}}{\beta p_{ik'} + (1+\mu\delta) p_{ik}} = \frac{\nu \beta}{\nu \beta + (1+\mu\delta)},
	\end{equation*}
	where $\nu := \frac{p_{ik'}}{p_{ik}} \in (\frac{1}{2}, 2]$. By differentiating with respect to $\nu$ and $\mu$, we observe that the last term is increasing in~$\nu$ and decreasing in~$\mu$. Thus, we lower bound this expression by 
	\begin{equation*}
	\frac{|B(k')|}{|B(k') \cup S(k)|} \geq \frac{\beta/2}{\beta/2 + (1+2\delta)}.
	\end{equation*}
	Therefore, $j$ is processed for at least a $\frac{\beta/2}{\beta/2 + (1+2\delta)}$-fraction in $\bigcup_{k \in K}B(k) \cup \bigcup_{k \in K_2}S(k)$.
	
	We now consider the set~$K_1$. The total processing volume of these jobs is bounded from above by \(
	\sum_{c=0}^\infty \frac{\gamma }{2^c} p_{ij} = 2\gamma p_{ij}. 	
	\)
	By \Cref{lem:com:LengthSchedulingInterval},~$|S(k)| \leq (1+2\delta)p_{ik}$. Combining these two observations, we obtain
	\(
	\big| \bigcup_{k \in K_1 } S(k) \big| \leq 2(1+2\delta)\gamma p_{ij}.
	\)
	Combining
	the latter with the bound for~$K_2$, we conclude that~$j$ is scheduled for at least 
	\begin{equation*}
	\Big|[a_j, d_j')\setminus \bigcup_{k \in K} S(k) \Big| \geq \frac{\beta/2}{\beta/2 + (1+2\delta)} \big( (1+\delta) - 2(1+2\delta)\gamma \big) p_{ij} \geq p_{ij}
	\end{equation*}
	units of time, where the last inequality follows from Equation \eqref{eq:CompleteAll}. Therefore,~$j$ completes before~$d_j' = a_j + (1+\delta)p_{ij} \leq d_j$, which concludes the proof.
\end{proof}

\section{Competitiveness: admitting sufficiently many jobs}\label{sec:otm:AdmitMany}

This section shows that the blocking algorithm admits sufficiently many jobs to be~$\OO\big(\frac1{\eps-\delta}\big)$-com\-pet\-i\-tive. As mentioned before, this proof is based on the observation that, at time~$\tau$, the blocking algorithm admits any job available for machine~$i$ if its processing time is less than~$\gamma p_{ij_i}$, where~$j_i$ is the job processed by machine~$i$ at time~$\tau$, and this time point is not blocked by another job~$k_i$ previously admitted by~$j_i$ to machine~$i$. We start by formalizing this observation for a class of non-migratory online algorithms before proving that this enables us to bound the number of jobs any feasible schedule successfully schedules during a particular period. Then, we use it to show that the blocking algorithm is indeed~$\OO\big(\frac1{\eps-\delta }\big)$-competitive. 

\subsection{A class of online algorithms}\label{sec:otm:ClassAlgorithms} 

In this section, we investigate a class of non-migratory online algorithms. 
Recall that a job~$j$ is called available for machine~$i$ at time~$\tau$ if it is  released before or at time~$\tau$,~$d_j - \tau \geq  (1+\delta)p_{ij}$, and is not yet admitted. 

We consider a non-migratory online algorithm \alg with the following properties.  
\begin{enumerate}[label = (P\arabic*)]
	\item\label{enum:alg:available} \alg only admits available jobs. 
	\item\label{enum:alg:threshold} Retrospectively, for each time~$\tau$ and each machine~$i$, there is a threshold~$u_{i\tau} \in [0, \infty]$ such that any job~$j$ that was available for machine~$i$ and not admitted to machine~$i$ by \alg at time~$\tau$ satisfies~$p_{ij} \geq u_{i\tau}$. 
	The function~$u^{(i)} : \R \rightarrow [0,\infty],  \tau \mapsto u_{i\tau}$ is piece-wise constant and right-continuous for every machine~$i \in \{1,\ldots,m\}$. Further, there are only countably many points of discontinuity. (This last property is used to simplify the exposition.)
\end{enumerate}

\subsubsection*{Key lemma on the size of non-admitted jobs}

For the proof of the main result in this section, we rely on the following strong, structural lemma about the volume processed by a feasible non-migratory schedule in some time interval and the size of jobs admitted by a non-migratory online algorithm satisfying \ref{enum:alg:available} and \ref{enum:alg:threshold} in the same time interval.

Let~$\sigma$ be a feasible non-migratory schedule. Without loss of generality, we assume that~$\sigma$ completes all jobs that it started on time. Let~$X^\sigma$ be the set of jobs completed by $\sigma$ and not admitted by \alg. For~$1\leq i \leq m$, let~$X_i^\sigma$ be the set of jobs in~$X^\sigma$ processed by machine~$i$. Let~$C_x$ be the completion time of job~$x\in X^\sigma$ in $\sigma$. 

\begin{lemma}\label{lem:Volume} 
	Let $0 \leq \vartheta_1 \leq \vartheta_2$ and fix $x\in X_i^\sigma$ as well as $Y\subset X_i^\sigma\setminus \{x\}$. If 
	\begin{enumerate}
		\myitem[(R)]\label{en:LR} $r_x \geq \vartheta_1$ as well as $r_y \geq \vartheta_1$ for all $y \in Y$,
		\myitem[(C)]\label{en:LC} $C_x\geq C_y$ for all $y \in Y$, and 
		\myitem[(P)]\label{en:LP} $\sum_{y \in Y} p_{iy} \geq \frac{\eps}{\eps - \delta} (\vartheta_2 - \vartheta_1)$
	\end{enumerate}
	hold, then $p_{ix} \geq u_{i\vartheta_2}$, where~$u_{i\vartheta_2}$ is the threshold imposed by~\alg at time~$\vartheta_2$. In particular, if~$u_{i,\vartheta_2} = \infty$, then no such job~$x$ exists.
\end{lemma}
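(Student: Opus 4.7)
My plan is to prove the contrapositive: assume $p_{ix} < u_{i\vartheta_2}$ and deduce a contradiction. Since $x \in X_i^\sigma$ is never admitted by $\alg$, property~\ref{enum:alg:threshold} would force $p_{ix} \geq u_{i\vartheta_2}$ if $x$ were available for machine~$i$ at time~$\vartheta_2$. Hence $x$ must fail to be available at~$\vartheta_2$, which unfolds into two subcases: either (a)~$r_x > \vartheta_2$ or (b)~$d_x - \vartheta_2 < (1+\delta)\, p_{ix}$.

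The central tool is a simple volume inequality. Conditions~\ref{en:LR} and~\ref{en:LC} ensure that every processing interval of $\{x\} \cup Y$ produced by~$\sigma$ on machine~$i$ lies within $[\vartheta_1, C_x]$, and since~$\sigma$ is feasible and non-migratory these intervals are pairwise disjoint. Combined with $C_x \leq d_x$ this yields
\[
    p_{ix} + \sum_{y \in Y} p_{iy} \;\leq\; d_x - \vartheta_1.
\]

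In case~(b), substituting $d_x < \vartheta_2 + (1+\delta)\, p_{ix}$ into the displayed inequality and invoking~\ref{en:LP} gives, after a short rearrangement, the \emph{lower} bound $p_{ix} > \tfrac{\vartheta_2 - \vartheta_1}{\eps - \delta}$. Independently, pairing the $\eps$-slack $d_x \geq r_x + (1+\eps)\, p_{ix}$ with the same upper bound on~$d_x$ yields $r_x < \vartheta_2 - (\eps - \delta)\, p_{ix}$, and the inequality $r_x \geq \vartheta_1$ from~\ref{en:LR} then produces the \emph{upper} bound $p_{ix} < \tfrac{\vartheta_2 - \vartheta_1}{\eps - \delta}$. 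These two bounds contradict each other and close case~(b).

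Case~(a), where $r_x > \vartheta_2$, is the step I expect to be the main obstacle. Here all processing of~$x$ lies in $(\vartheta_2, C_x]$, so only jobs of~$Y$ can occupy the window $[\vartheta_1, \vartheta_2]$. Combined with~\ref{en:LP}, this forces the portion of~$Y$'s volume processed after~$\vartheta_2$ to be at least $\tfrac{\delta}{\eps - \delta}(\vartheta_2 - \vartheta_1)$, and hence $C_x - \vartheta_2 \geq p_{ix} + \tfrac{\delta}{\eps - \delta}(\vartheta_2 - \vartheta_1)$. The plan is to derive the contradiction by pairing this refined volume bound with the $\eps$-slack of~$x$ and the release lower bound~$r_x \geq \vartheta_1$; if the direct route does not close the case, the fallback is to apply~\ref{enum:alg:threshold} at the later time~$r_x$, where~$x$ is trivially available by slackness, and exploit the piecewise-constant, right-continuous structure of~$u^{(i)}$ to transfer the inequality back to time~$\vartheta_2$.
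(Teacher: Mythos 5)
Your case~(b) is correct and is, in substance, the paper's proof: the paper also derives the volume bound on $[\vartheta_1,C_x)$, subtracts the capacity of $[\vartheta_1,\vartheta_2)$, and combines it with the slack inequality $(\eps-\delta)p_{ix} < \vartheta_2-\vartheta_1$ to contradict feasibility. You merely reorganize the same two inequalities as a lower and an upper bound on $p_{ix}$ rather than as a lower bound on $C_x$ exceeding $d_x$; this is a cosmetic difference.

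Your case~(a) is a genuine gap, and neither route you sketch will close it. The direct route fails because when $r_x>\vartheta_2$ the slack of $x$ only gives $d_x\geq r_x+(1+\eps)p_{ix}>\vartheta_2+(1+\eps)p_{ix}$, i.e.\ a \emph{lower} bound on $d_x$; there is no upper bound on $d_x$ (and hence none on $C_x$) against which the refined volume bound $C_x-\vartheta_2\geq p_{ix}+\tfrac{\delta}{\eps-\delta}(\vartheta_2-\vartheta_1)$ could produce a contradiction. The fallback fails because Property~\ref{enum:alg:threshold} at time $r_x$ gives $p_{ix}\geq u_{i,r_x}$, but the threshold $u^{(i)}$ is not monotone in time, so piecewise-constancy and right-continuity give no way to transfer this into $p_{ix}\geq u_{i,\vartheta_2}$ at the \emph{earlier} time $\vartheta_2<r_x$. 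What you are running into is that the lemma, read literally, needs the additional hypothesis $r_x\leq\vartheta_2$: if $x$ is not even released at time $\vartheta_2$, its size cannot be lower-bounded by the threshold at that time. The paper's own proof silently makes this assumption (it lists only two reasons $x$ is not admitted at $\vartheta_2$, omitting $r_x>\vartheta_2$). This is harmless for the paper because in the one place the lemma is invoked, inside the proof of \Cref{theo:otm:Charging}, every job $x$ assigned to $A_t$ originates from some $F_s$ with $s<t$, so $r_x<\tau_t=\vartheta_2$ is automatic. If you want a self-contained proof of the lemma as stated, the clean fix is to add $r_x\leq\vartheta_2$ to condition~(R) and note that it holds wherever the lemma is applied; attempting to argue case~(a) away is not the right move.
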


\begin{proof}	
	We show the lemma by contradiction.	More precisely, we show that, if~$p_{ix} < u_{i\vartheta_2}$, the schedule~$\sigma$ cannot complete~$x$ on time and, hence, is not feasible. 
	
	Remember that~$x\in X_i^\sigma$ implies that \alg did not admit job~$x$ at any point~$\vartheta$. At time~$\vartheta_2$, there are two possible reasons why~$x$ was not admitted: $p_{ix} \geq u_{i\vartheta_2}$ or $ d_x - \vartheta_2 < (1+\delta)p_{ix}$. In case of the former, the statement of the lemma holds. Toward a contradiction, suppose~$p_{ix} < u_{i\vartheta_2}$ and, thus, $d_x - \vartheta_2 < (1+\delta)p_{ix}$ has to hold. As job~$x$ arrives with a slack of at least~$\eps p_{ix}$ at its release date~$r_x$ and~$r_x \geq \vartheta_1$ by assumption, we have
	\begin{equation}\label{eq:zeta-length}
	\vartheta_2 - \vartheta_1 \geq \vartheta_2 - d_x +d_x - r_x > -(1+\delta) p_{ix} + (1+\eps)p_{ix}= (\eps - \delta)p_{ix}. 	
	\end{equation}
	
	Since all jobs in~$Y$ complete earlier than~$x$ by Assumption~\hyprefLC and are only released after~$\vartheta_1$ by~\hyprefLR, the volume processed by~$\sigma$ in $[\vartheta_1, C_x)$ on machine~$i$ is at least $\epsdeltafrac (\vartheta_2 - \vartheta_1) + p_{ix}$ by~\hyprefLP. Moreover,~$\sigma$ can process at most a volume of $(\vartheta_2 - \vartheta_1)$ on machine~$i$ in~$[\vartheta_1, \vartheta_2)$. These two bounds imply that~$\sigma$ has to process job parts with a processing volume of at least  \[
	\epsdeltafrac (\vartheta_2 - \vartheta_1) + p_{ix} - (\vartheta_2-\vartheta_1)  > 	\frac{\delta}{\eps-\delta} (\eps - \delta) p_{ix} + p_{ix} = (1+\delta) p_{ix}
	\]
	in $[\vartheta_2,C_x)$, where the inequality follows using Inequality~\eqref{eq:zeta-length}. Thus, $C_x \geq \vartheta_2 + (1+\delta) p_{ix} > d_x$, which contradicts the feasibility of~$\sigma$. 
	
	Observe that, by~\ref{enum:alg:available} and~\ref{enum:alg:threshold}, the online algorithm \alg admits a job available for machine~$i$ if it satisfies~$p_{ij} < u_{i\tau}$. In particular, if $u_{i\tau} = \infty$ for some time point~$\tau$, then~\alg admits any job available for machine~$i$. Hence, for~$0 \leq \vartheta_1 \leq \vartheta_2$ with $u_{i\vartheta_2} = \infty$, there does not exist a job~$x \in X_i^\sigma$ and a set~$Y \subset X_i^\sigma\setminus \{x\}$ satisfying \hyprefLR, \hyprefLC, and \hyprefLP for machine~$i$.	
\end{proof}

\subsubsection*{Bounding the number of non-admitted jobs}

In this section, we use the Properties~\ref{enum:alg:available} and~\ref{enum:alg:threshold} to bound the throughput of a non-migratory optimal (offline) algorithm. To this end, we fix an instance as well as an optimal schedule with job set \opt. Let \alg be a non-migratory online algorithm satisfying~\ref{enum:alg:available} and~\ref{enum:alg:threshold}. 

Let~$X$ be the set of jobs in \opt that the algorithm \alg did not admit. We assume without loss of generality that all jobs in \opt complete on time. Since \opt as well as~\alg are non-migratory, we compare the throughput machine-wise. To this end, we fix one machine~$i$.
Let~$X_i \subset X$ be the set of jobs scheduled on machine~$i$ by \opt. 

Assumption~\ref{enum:alg:threshold} guarantees that the threshold~$u_{i,\tau}$ is piece-wise constant and right-continuous, i.e.,~$u^{(i)}$ is constant on intervals of the form~$[\tau_t, \tau_{t+1})$.  Let~$\bari$ represent the set of maximal intervals~$I_t = [\tau_t, \tau_{t+1})$ where~$u^{(i)}$ is constant. That is,~$u_{i,\tau}= \thresh$ holds for all~$\tau \in I_t$ and~$u_{i,\tau_{t+1}} \neq \thresh$, where~$\thresh:= u_{i,\tau_t}$, 
The main result of this section is the following theorem.

\begin{theorem}\label{theo:otm:Charging}
	Let~$\barx$ be the set of jobs that are scheduled on machine~$i$ in the optimal schedule. Let~$\bari = \{I_1,\ldots,I_{T}\}$ be the set of maximal intervals on machine~$i$ of \alg such that the machine-dependent threshold is constant for each interval and has the value~$\thresh$ in interval~$I_t= [\tau_{t}, \tau_{t+1})$. Then, 
	\[|\barx| \leq \sum_{t=1}^{T} \epsdeltafrac \frac{\tau_{t+1} - \tau_t}{\thresh}  + T,\]
	where we set~$\frac{\tau_{t+1} - \tau_t}{\thresh} = 0$ if~$\thresh = \infty$ and $\frac{\tau_{t+1} - \tau_t}{\thresh} = \infty$ if~$\{\tau_t, \tau_{t+1}\} \cap \{-\infty, \infty\} \neq \emptyset$ and~$\thresh < \infty$.
\end{theorem}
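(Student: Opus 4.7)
Fix machine $i$ and sort the $N := |\barx|$ jobs of $\barx$ in nondecreasing order of \opt's completion times: $j_1, j_2, \ldots, j_N$. Let $\alpha := \frac{\eps - \delta}{\eps} \in (0,1)$. The plan is a sweep-cursor charging argument. I introduce a monotone cursor $c_1 \le c_2 \le \cdots \le c_N$ defined by $c_1 := r_{j_1}$ and
\[
c_\ell := \max\bigl(c_{\ell-1} + \alpha\, p_{ij_{\ell-1}},\; r_{j_\ell}\bigr) \quad \text{for } \ell \ge 2,
\]
and charge each $j_\ell$ to the unique interval $I_{t(\ell)}$ containing $c_\ell$.

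The heart of the proof is the inequality $p_{ij_\ell} \ge u_{t(\ell)}$ for every $\ell$. Let $s^*(\ell) \le \ell$ be the largest index with $c_{s^*(\ell)} = r_{j_{s^*(\ell)}}$, the most recent ``release-jump'' of the cursor; unrolling the recursion gives $c_\ell = r_{j_{s^*(\ell)}} + \alpha \sum_{s = s^*(\ell)}^{\ell - 1} p_{ij_s}$. I invoke \Cref{lem:Volume} with $x := j_\ell$, $Y := \{j_{s^*(\ell)}, \ldots, j_{\ell-1}\}$, $\vartheta_1 := r_{j_{s^*(\ell)}}$, and $\vartheta_2 := c_\ell$. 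Condition~\hyprefLC follows from the completion-time sort, and~\hyprefLP holds with equality because $c_\ell - \vartheta_1 = \alpha \sum_{s \in Y} p_{ij_s}$ rearranges to $\sum_{s \in Y} p_{ij_s} = \epsdeltafrac\,(\vartheta_2 - \vartheta_1)$. Thus \Cref{lem:Volume} yields $p_{ij_\ell} \ge u_{i,c_\ell} = u_{t(\ell)}$, which is exactly what the counting step needs.

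The counting is now straightforward. By monotonicity of $c_\ell$, the jobs charged to any single interval $I_t$ form a consecutive block $\ell_1 < \ell_2 < \cdots < \ell_{k_t}$ with $\tau_t \le c_{\ell_1} \le \cdots \le c_{\ell_{k_t}} < \tau_{t+1}$. Telescoping $c_{\ell_{r+1}} \ge c_{\ell_r} + \alpha\, p_{ij_{\ell_r}}$ and using $p_{ij_{\ell_r}} \ge u_t$ gives
\[
(k_t - 1)\,\alpha\, u_t \;\le\; c_{\ell_{k_t}} - c_{\ell_1} \;\le\; \tau_{t+1} - \tau_t,
\]
i.e.\ $k_t \le \epsdeltafrac \cdot \frac{\tau_{t+1}-\tau_t}{u_t} + 1$. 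Summing over $t = 1, \ldots, T$ delivers the stated bound. The edge conventions are consistent: if $u_t = \infty$ then charging a job to $I_t$ would force $p_{ij} = \infty$ via \Cref{lem:Volume}, which is impossible, so the contribution $0 + 1$ is a trivial upper bound; an unbounded interval with $u_t < \infty$ makes the right-hand side infinite and the bound vacuous.

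The delicate technical point---and my expected main obstacle---is verifying hypothesis~\hyprefLR. Sorting by completion time does not preserve release-time order, so an intra-epoch job $j_s$ with $s^*(\ell) < s < \ell$ may satisfy $r_{j_s} < r_{j_{s^*(\ell)}}$, in which case the naive choice $\vartheta_1 = r_{j_{s^*(\ell)}}$ fails~\hyprefLR. I plan to handle this by either replacing $\vartheta_1$ with $\min_{s \in [s^*(\ell),\ell]} r_{j_s}$ and enlarging $Y$ with earlier jobs so that~\hyprefLP still forces $\vartheta_2 = c_\ell$, or by slightly refining the cursor rule so that an out-of-order release triggers an auxiliary release-jump. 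Either adjustment preserves the charging identity $\vartheta_2 = c_\ell$ up to an amount absorbed by the monotonicity of the release-jump release times themselves (an easy induction from the cursor rule); the rest of the argument is the clean volume calculation above.
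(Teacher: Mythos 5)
Your high-level plan matches the paper's: partition or charge the jobs of $\barx$ to the threshold-constant intervals $I_t$, apply \Cref{lem:Volume} to show that each charged job has $p_{ix} \ge u_t$, and then count by dividing the interval length by the threshold. The cursor framing (sort by $\opt$-completion time, advance by $\alpha\,p$ per job, reset at late releases) is a genuinely different bookkeeping device from the paper's, which processes intervals in order, keeps a bucket $F_t$ of jobs released in $I_t$, overflows excess jobs into $A_{t+1}$, and picks the surviving $G_t$ greedily by completion time. The arithmetic, the role of the $+1$, and the treatment of $u_t = \infty$ in your counting step are all fine.

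The gap is exactly where you expect it: verifying \hyprefLR for the witness set. You correctly observe that a job $j_s$ with $s^*(\ell) < s < \ell$ may have $r_{j_s} < r_{j_{s^*(\ell)}} = \vartheta_1$, but neither of your proposed repairs closes the hole. Lowering $\vartheta_1$ to $\min_{s} r_{j_s}$ strictly increases $\vartheta_2 - \vartheta_1$, so the same $Y$ no longer meets \hyprefLP; you therefore ``enlarge $Y$ with earlier jobs,'' but those jobs (indices $< s^*(\ell)$) are exactly the ones you have no release-time control over, so they may again violate \hyprefLR. The ``auxiliary release-jump'' variant has the symmetric problem: if the auxiliary jump moves the reference time backward to an out-of-order release, \hyprefLP fails because the epoch's accumulated processing volume was calibrated to $c_\ell - r_{j_{s^*(\ell)}}$, not to the longer window; if instead the auxiliary jump moves the cursor forward, monotonicity of the count is lost. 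The requirements of \hyprefLR and \hyprefLP pull in opposite directions and cannot both be satisfied by a local, one-step adjustment to the cursor.

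The paper's resolution is the piece you are missing: it tracks, for every job $x$ that overflows past an interval, a \emph{witness pair} $(\tau_{x,t},\, Y_{x,t})$ with $\tau_{x,t} \le r_x$, certified by induction to satisfy \hyprefR, \hyprefC, \hyprefP. When $x$ overflows from $I_{t-1}$ to $I_t$ and some job $z$ retained in $G_{t-1}$ has $r_z < \tau_{x,t-1}$, the proof does not shrink $Y$ or stretch $\vartheta_1$ locally; it \emph{inherits} $z$'s entire certificate, setting $\tau_{x,t} := \tau_{z,t-1}$ and $Y_{x,t} := Y_{z,t-1} \cup G_{t-1}$, where $z$ is chosen with minimal $\tau_{z,t-1}$. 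This works because $z$, being a job that itself overflowed into $A_{t-1}$, already carries (by induction) a set $Y_{z,t-1}$ whose total size covers $[\tau_{z,t-1}, \tau_{t-1})$, and the minimality of $\tau_{z,t-1}$ makes \hyprefLR hold for all of $G_{t-1}$. Your cursor approach could plausibly be repaired by grafting on an analogous recursive inheritance (an epoch that encounters an out-of-order release adopts the certificate of an earlier epoch rather than recomputing one), but as written the ``easy induction'' you appeal to does not exist, and the proof is incomplete at its central step.
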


We observe that~$T = \infty$ trivially proves the statement as~$\barx$ contains at most finitely many jobs. The same is true if~$\frac{\tau_{t+1} - \tau_t}{\thresh} = \infty$ for some~$t \in [T]$. Hence, for the remainder of this section we assume without loss of generality that~$\bari$ only contains finitely many intervals and that~$\frac{\tau_{t+1} - \tau_t}{\thresh} < \infty$ holds for every~$t \in [T]$. 

To prove this theorem, we develop a charging scheme that assigns jobs~$x \in \barx$ to intervals in~$\bari$. 
The idea behind our charging scheme is that \opt does not contain arbitrarily many jobs that are available in~$I_t$ since~$\thresh$ provides a natural lower bound on their processing times. In particular, the processing time of any job that is \emph{released} during interval~$I_t$ and not admitted by the algorithm exceeds the lower bound~$\thresh$. Thus, the charging scheme relies on the release date~$r_x$ and the size~$p_{ix}$ of a job~$x\in \barx$ as well as on the precise structure of the intervals created by \alg. 

The charging scheme we develop is based on a careful modification of the following partition~$(F_t)_{t=1}^T$ of the set~$\barx$. Fix an interval~$I_t \in \bari$ and define the set~$F_t \subseteq \barx$ as the set that contains all jobs~$x\in \barx$ released during~$I_t$, i.e., $F_t = \{ x \in \barx: r_x \in I_t \}$. 
Since, upon release, each job~$x \in \barx$ is available and not admitted by~\alg, the next fact directly follows from Properties~\ref{enum:alg:available} and~\ref{enum:alg:threshold}.
\begin{fact}\label{cor:Ft}
	For all jobs~$x \in F_t$ it holds~$p_{ix} \geq \thresh$. In particular, if~$\thresh = \infty$, then~$F_t = \emptyset$.
\end{fact}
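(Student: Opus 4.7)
The plan is to instantiate Property~\ref{enum:alg:threshold} at the release time of $x$. Concretely, I want to show that every $x \in F_t$ is available for machine~$i$ at time $r_x$ but, by definition of $F_t \subseteq \barx$, was not admitted by \alg at that moment; then \ref{enum:alg:threshold}, together with $r_x \in I_t$ and the maximality of $I_t$ (so that $u_{i,r_x} = \thresh$), immediately yields $p_{ix} \geq \thresh$.

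The only point that requires verification is availability at $\tau = r_x$. Unpacking the definition given just above Property~\ref{enum:alg:available}, I need $r_x \leq \tau$ (trivial); that $x$ is not yet admitted by \alg at $\tau$ (part of the definition of $\barx$, since \alg never admits any job in $\barx$); and $d_x - r_x \geq (1+\delta)\,p_{ix}$. For the last condition, note that $x \in \barx$ means $x$ is processed on machine~$i$ by \opt, so machine~$i$ is eligible for $x$ and $p_{ix} < \infty$; the slackness assumption therefore provides $d_x - r_x \geq (1+\eps)\,p_{ix}$, which dominates $(1+\delta)\,p_{ix}$ because the analysis is conducted under $\delta \in [\tfrac{\eps}{2}, \eps)$.

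The ``In particular'' clause is then immediate: $\thresh = \infty$ combined with the just-established $p_{ix} \geq \thresh$ would force $p_{ix} = \infty$, contradicting eligibility of machine~$i$ for $x$; hence $F_t$ must be empty in that case. No real obstacle arises; the fact is essentially an unpacking of Property~\ref{enum:alg:threshold}, singled out here only because it will serve as the base case of the charging scheme that assigns jobs of $\barx$ to intervals of $\bari$ in the subsequent proof of \Cref{theo:otm:Charging}.
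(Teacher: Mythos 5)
Your proposal is correct and follows exactly the route the paper gestures at: instantiate Property~\ref{enum:alg:threshold} at time $r_x\in I_t$, after verifying availability of $x$ for machine~$i$ at its release via the $\eps$-slack assumption and $\delta<\eps$, and the definition of $\barx$ (never admitted by \alg). The paper merely asserts that the fact ``directly follows from Properties~\ref{enum:alg:available} and~\ref{enum:alg:threshold}''; you have supplied the same argument with the details filled in, including the correct treatment of the $\thresh=\infty$ case via $p_{ix}<\infty$ for an eligible machine.
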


In fact, the charging scheme maintains this property and only assigns jobs in~$\barx$ to intervals~$I_t$ if $p_{ix} \geq \thresh$. In particular, no job will be assigned to an interval with~$\thresh = \infty$.

We now formalize how many jobs in~$\barx$ are assigned to a specific interval~$I_t$. 
Let \[
\fii_t := \Big\lfloor \epsdeltafrac \frac{\tau_{t+1} - \tau_t}{\thresh} \Big\rfloor + 1 
\]
if~$\thresh < \infty$, and~$\fii_t = 0$ if~$\thresh = \infty$. We refer to~$\fii_t$ as the \emph{target number} of~$I_t$. 
As discussed before, we assume~$\frac{\tau_{t+1} - \tau_t}{\thresh} < \infty$, and, thus, the target number is well-defined.  
If each of the sets~$F_t$ satisfies~$|F_t| \leq \fii_t$, then \cref{theo:otm:Charging} immediately follows. In general, $|F_t| \leq \fii_t$ does not have to be true since jobs in~\opt may be preempted and processed during several intervals~$I_t$. Therefore, for proving \cref{theo:otm:Charging}, we show that there always exists another partition~$(G_t)_{t=1}^T$ of~$\barx$ such that~$|G_t| \leq \fii_t$ holds. 

The high-level idea of this proof is the following: Consider an interval~$I_t = [\tau_t, \tau_{t+1})$. If $F_t$ does not contain too many jobs, i.e., $|F_t| \leq \fii_t$, we would like to set $G_t = F_t$. Otherwise, we find a later interval~$I_{t'}$ with $|F_{t'}| < \fii_{t'}$ such that we can assign the excess jobs in~$F_t$ to~$I_{t'}$.

\begin{proof}[Proof of \cref{theo:otm:Charging}]
	As observed before, it suffices to show the existence of a partition $\G=(G_t)_{t=1}^T$ of~$\barx$ such that $|G_t| \leq \fii_t$ in order to prove the theorem. 
	
	In order to repeatedly apply \cref{lem:Volume}, we only assign excess jobs~$x\in F_t$ to $G_{t'}$ for~$t < t'$ if their processing time is at least the threshold of~$I_{t'}$, i.e., $p_{ix} \geq {u_{t'}}$. By our choice of parameters, a set~$G_{t'}$ with~$\fii_{t'}$ many jobs of size at least~${u_{t'}}$ ``covers'' the interval $I_{t'} = [\tau_{t'}, \tau_{t'+1})$ as often as required by~\hyprefLP in \cref{lem:Volume}, i.e., 
	\begin{equation}
	\sum_{x \in G_{t'}} p_{ix} \geq \fii_{t'}\cdot {u_{t'}} = \bigg( \bigg\lfloor \epsdeltafrac \frac{\tau_{t'+1} - \tau_{t'}}{{u_{t'}}} \bigg\rfloor + 1 \bigg) \cdot {u_{t'}} \geq \epsdeltafrac(\tau_{t'+1} - \tau_{t'}).
	\end{equation}
	
	The proof consists of two parts: the first one is to inductively (on $t$) construct the partition~$\G = ( G_t )_{t=1}^T$ of~$\barx$, where $|G_t| \leq \fii_t$ holds for~$t \in [T-1]$. The second one is the proof that a job~$x\in G_t$ satisfies $p_{ix} \geq {u_{t}}$ which will imply~$|G_T| \leq \fii_T$. During the construction of~$\G$ we define temporary sets~$A_{t}\subset \barx$ for intervals~$I_t$. The set~$G_{t}$ is chosen as a subset of~$F_{t} \cup A_{t}$ of appropriate size. In order to apply \cref{lem:Volume} to each job in $A_{t}$ individually, alongside~$A_t$, we construct a set~$Y_{x,t}$ and a time~$\tau_{x,t}\leq r_x$ for each job~$x\in \barx$ that is added to~$A_t$. Let~$C_y^*$ be the completion time of some job~$y\in \barx$ in the optimal schedule~\opt. The second part of the proof is to show that~$x$,~$\tau_{x,t}$, and~$Y_{x,t}$ satisfy  
	
	\begin{enumerate}
		\myitem[(R)]\label{en:R} $r_y \geq \txi$ for all $y\in \yxi$, 
		\myitem[(C)]\label{en:C} $C_x^* \geq C_y^*$ for all $y\in \yxi$,  and 
		\myitem[(P)]\label{en:P} $\sum_{y \in \yxi} p_{iy} \geq \epsdeltafrac(\tau_t - \txi)$. 
	\end{enumerate}
	
	This implies that $x$, $Y = \yxi$, $\vartheta_1 = \txi$, and $\vartheta_2 = \tau_t$ satisfy the conditions of \Cref{lem:Volume}, and thus the processing time of~$x$ is at least the threshold at time~$\tau_t$, i.e.,~$p_{ix}\geq u_{i\tau_t} = {u_{t}}$.
	
	\paragraph{Constructing $G = ( G_t )_{t=1}^T$.}
	We inductively construct the sets~$G_t$ in the order of their indices. We start by setting~$A_t = \emptyset$ for all intervals~$I_t$ with $t \in {T}$. We define~$\yxi = \emptyset$ for each job~$x\in \barx$ and each interval~$I_t$. The preliminary value of the time~$\txi$ is the minimum of the starting point~$\tau_t$ of the interval~$I_t$ and the release date~$r_x$ of~$x$, i.e., $\txi := \min\{\tau_t, r_x\}$. We refer to the step in the construction where~$G_t$ was defined by \emph{step}~$t$. 
	
	Starting with $t=1$, let~$I_t$ be the next interval to consider during the construction with~$t < T$. Depending on the cardinality of~$F_t \cup A_t$, we distinguish two cases. If $|F_t \cup A_t| \leq \fii_t$, then we set~$G_t = F_t \cup A_t$.
	
	If $|F_t \cup A_t| > \fii_t$, then we order the jobs in~$F_t \cup A_t$ in increasing order of completion times in the optimal schedule. The first~$\fii_t$ jobs are assigned to~$G_t$ while the remaining~$|F_t \cup A_t| - \fii_t$ jobs are added to~$A_{t+1}$. In this case, we might have to redefine the times~$\tau_{x,t+1}$ and the sets~$Y_{x,t+1}$ for the jobs~$x$ that were newly added to~$A_{t+1}$. Fix such a job~$x$. If there is no job~$z$ in the just defined set~$G_t$ that has a smaller release date than~$\txi$, we set~$\txinext= \txi$ and~$\yxinext = \yxi \cup G_t$. Otherwise let~$z \in G_t$ be a job with $r_z < \txi$ that has the smallest time~$\tau_{z,t}$. We set~$\txinext = \tau_{z,t}$ and~$\yxinext = Y_{z,t} \cup G_t$. 
	
	Finally, we set~$G_T = F_T \cup A_T$. We observe that~${u_T} < \infty$ implies~$\fii_T = \infty$ because~$\tau_{T+1} = \infty$. Since this contradicts the assumption~$\fii_t < \infty$ for all~$t \in [T]$, this implies~${u_T}  = \infty$. We will show that~$p_x\geq {u_{T}}$ for all~$x \in G_T$. Hence,~$G_T = \emptyset$. Therefore~$|G_T| = \fii_T = 0$.
	
	\paragraph{Bounding the size of jobs in~$G_t$.}
	We consider the intervals again in increasing order of their indices and show by induction that any job~$x$ in~$G_t$ satisfies~$p_{ix} \geq {u_{t}}$ which implies~$G_t = \emptyset$ if ${u_t} = \infty$. 
	Clearly, if~$x \in F_t \cap G_t$, \cref{cor:Ft} guarantees~$p_{ix} \geq \thresh$. Hence, in order to show the lower bound on the processing time of~$x \in G_t$, it is sufficient to consider jobs in~$G_t \setminus F_t \subset A_t$. To this end, we show that for such jobs \hyprefR, \hyprefC, and \hyprefP are satisfied. 
	Thus, \cref{lem:Volume} guarantees that~$p_{ix} \geq u_{i\tau_t} = \thresh$ by definition. Hence,~$A_t = \emptyset$ if~${u_{t}} = \infty$ by \cref{lem:Volume}.
	
	By construction,~$A_1 = \emptyset$. Hence, \hyprefR, \hyprefC, and \hyprefP are satisfied for each job~$x \in A_1$.
	
	Suppose that the Conditions~\hyprefR,~\hyprefC, and~\hyprefP are satisfied for all~$x \in A_s$ for all~$1 \leq s < t$. Hence, for~$s < t$, the set~$G_s$ only contains jobs $x$ with~$ p_{ix} \geq {u_{s}}$. Fix~$x \in A_t$. We want to show that~$p_{ix} \geq {u_{t}}$. By the induction hypothesis and by \cref{cor:Ft},~$p_{iy} \geq \prevthresh$ holds for all~$y \in G_{t-1}$. Since~$x$ did not fit in~$G_{t-1}$ anymore,~$|G_{t-1}| = \fii_{t-1}$. 
	
	We distinguish two cases based on~$G_{t-1}$. If there is no job~$z \in G_{t-1}$ with~$r_z < \tau_{x,t-1}$, then~$ \tau_{x,t}=\tau_{x,t-1}$, and~\hyprefR and~\hyprefC are satisfied by construction and by the induction hypothesis. For~\hyprefP, consider 
	\begin{align*}
	\sum_{y \in \yxi} p_{iy} & = \sum_{y \in Y_{x,t-1}} p_{iy} + \sum_{y \in G_{t-1}} p_{iy} \\
	& \geq \epsdeltafrac (\tau_{t-1} - \tau_{x,t-1}) + \prevthresh\cdot \fii_{t-1} \\
	& \geq \epsdeltafrac (\tau_{t-1} - \tau_{x,t-1}) + \epsdeltafrac (\tau_{t} - \tau_{t-1})\\
	& = \epsdeltafrac (\tau_{t} - \tau_{x,t})\, ,		
	\end{align*}
	where the first inequality holds due to the induction hypothesis. By \cref{lem:Volume},~$p_{ix} \geq u_{\tau_t} = \thresh$.
	
	If there is a job~$z \in G_{t-1}$ with $r_z < \tau_{x,t-1} \leq \tau_{t-1}$, then $z \in A_{t-1}$. In step~$t-1$, we chose~$z$ with minimal~$\tau_{z,t-1}$. Thus,~$r_y \geq \tau_{y,t-1} \geq \tau_{z,t-1}$ for all~$y \in G_{t-1}$ and~$r_x \geq \tau_{x,t-1} > r_z \geq \tau_{z,t-1}$ which is Condition~\hyprefR for the jobs in~$G_{t-1}$. Moreover, by the induction hypothesis,~$r_y \geq \tau_{z,t-1}$ holds for all~$y \in Y_{z,t-1}$. Thus, $\tau_{x,t}$ and~$Y_{x,t}$ satisfy~\hyprefR. For~\hyprefC, consider that~$C_x^* \geq C_y^*$ for all~$y \in G_{t-1}$ by construction and, thus, ~$C_x^* \geq C_z^* \geq C_y^*$ also holds for all~$y \in Y_{z,t-1}$ due to the induction hypothesis. For~\hyprefP, observe that
	\begin{align*}
	\sum_{y \in Y_{x,t}} p_{iy} & = \sum_{y \in Y_{z,t-1}} p_{iy} + \sum_{y \in G_{t-1}} p_{iy} \\
	& \geq \epsdeltafrac (\tau_{t-1} - \tau_{z,t-1}) + \prevthresh\cdot \fii_{t-1} \\
	& \geq \epsdeltafrac (\tau_{t-1} - \tau_{z,t-1}) + \epsdeltafrac ( \tau_{t} - \tau_{t-1}) \\
	& \geq \epsdeltafrac (\tau_t - \tau_{x,t}).		
	\end{align*}	
	Here, the first inequality follows from the induction hypothesis
	and the second from the definition of~$\prevthresh$ and~$\fii_{t-1}$. Hence, \cref{lem:Volume} implies~$p_{ix} \geq u_{\tau_t} = \thresh$.

	We note that~$p_{ix} \geq \thresh$ for all~$x \in G_t$ and for all~$t \in [T]$. 
	\paragraph{Bounding~$|\barx|$.} 
	By construction, we know that $\bigcup_{t=1}^{T} G_t = \barx$. We start with considering intervals~$I_t$ with~${u_{t}} = \infty$. Then,~$I_t$ has an unbounded threshold, i.e.,~$u_{i\tau} = \infty$ for all~$\tau \in I_t$,  and~$F_t = \emptyset$ by \cref{cor:Ft}.
	In the previous part we have seen that the conditions for \cref{lem:Volume} are satisfied. Hence,~$G_t = \emptyset$ if~${u_{t}} = \infty$. For~$t$ with~${u_{t}} < \infty$, we have~$|G_t| \leq \fii_t = \big\lfloor \epsdeltafrac \frac{\tau_{t+1} - \tau_t}{{u_{t}}} \big\rfloor + 1$. As explained before, this bounds the number of jobs in~$\barx$. 
\end{proof}

\subsection{The blocking algorithm admits sufficiently many jobs}

Having the powerful tool that we developed in the previous section at hand, it remains to show that the blocking algorithm admits sufficiently many jobs to achieve the competitive ratio of~$\OO\big( \frac{1}{\eps-\delta'}\big)$ where~$\delta' = \frac\eps2$ for commitment upon admission and~$\delta' = \max \big\{\frac\eps2, \delta \big\}$ for~$\delta$-commitment. To this end, we show that the blocking algorithm belongs to the class of online algorithms considered in \Cref{sec:otm:ClassAlgorithms}. Then, \cref{theo:otm:Charging} provides a bound on the throughput of an optimal non-migratory schedule.

We begin by showing that the blocking algorithm satisfies Properties~\ref{enum:alg:available} to~\ref{enum:alg:threshold}. The first property is clearly satisfied by the definition of the blocking algorithm. For the second and the third property, we observe that a new job~$\newjob$ is only admitted to a machine~$i$ during the scheduling interval of another job~$j$ admitted to the same machine if~$p_{i\newjob} < \gamma p_{ij}$. Further, the time point of admission must not be blocked by a similar- or smaller-size job~$k$ previously admitted during the scheduling interval of~$j$. This leads to the bound~$p_{i\newjob} < \frac12 p_{ik}$ for any job~$k$ whose blocking period contains the current time point. Combining these observations leads to a machine-dependent threshold~$u_{i,\tau} \in [0,\infty]$ satisfying~\ref{enum:alg:threshold}.

More precisely, fix a machine~$i$ and a time point~$\tau$. Using~$j \rightarrow i$ to denote that~$j$ was admitted to machine~$i$, we define~$u_{i,\tau} := \min_{j:\, j \rightarrow i, \tau \in S(j)} \gamma p_{ij} $ if there is no job~$k$ admitted to machine~$i$ with~$\tau \in B(k)$, with~$\min \emptyset = \infty$. Otherwise, we set~$u_{i,\tau} := \frac12p_{ik}$. We note that the function~$u^{(i)}$ is piece-wise constant and right-continuous due to our choice of right-open intervals for defining scheduling intervals and blocking periods. Moreover, the points of discontinuity of~$u^{(i)}$ correspond to the admission of a new job, the end of a scheduling interval, and the start as well as the end of a blocking period of jobs admitted to machine~$i$. Since we only consider instances with a finite number of jobs, there are at most finitely many points of discontinuity of~$u^{(i)}$. Hence, we can indeed apply \cref{theo:otm:Charging}. 

Then, the following theorem is the main result of this section. 

\begin{theorem}\label{theo:com:AdmitEnoughJobs}
	An optimal non-migratory (offline) algorithm can complete at most a factor $\alpha+5$ more jobs on time than admitted by the blocking algorithm, where $\alpha := \epsdeltafrac\big(2\beta + \frac{1+2\delta}{\gamma} \big)$. 
\end{theorem}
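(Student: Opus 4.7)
The plan is to apply \Cref{theo:otm:Charging} machine by machine and combine the per-machine bounds with the trivial inequality $|\opt|\le|X|+|\alg|$, where $|\alg|$ denotes the total number of jobs admitted by the blocking algorithm, $\opt$ is the job set of a fixed optimal non-migratory schedule, and $X=\opt\setminus\alg$ (with $X_i$ its restriction to machine~$i$). In order to invoke the theorem, I would first confirm that the blocking algorithm, together with the threshold function $u^{(i)}$ introduced just before the theorem, satisfies \ref{enum:alg:available} and \ref{enum:alg:threshold}; this is essentially done in the preceding paragraphs and reduces to noting that any available job of size strictly below the current threshold is admitted by the routine, that $u^{(i)}$ is piece-wise constant and right-continuous by construction, and that only finitely many discontinuities arise.

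The next step is to bound the integral $\int \frac{1}{u^{(i)}(\tau)}\,d\tau$ by $N_i\bigl(2\beta+\frac{1+2\delta}{\gamma}\bigr)$ on each machine~$i$, where $N_i$ is the number of jobs admitted to~$i$. The argument is a charging scheme: every $\tau$ with $u_{i,\tau}<\infty$ lies in some $S(j)$ or some $B(k)$ on machine~$i$, and I assign $\tau$ to an admitted job that realizes the minimum in the definition of $u_{i,\tau}$. Each admitted job~$j$ then collects a total charge of at most $\frac{|S(j)|}{\gamma p_{ij}}\le\frac{1+2\delta}{\gamma}$ from scheduling-interval assignments, by \Cref{lem:com:LengthSchedulingInterval}, and at most $\frac{|B(j)|}{p_{ij}/2}\le 2\beta$ from its blocking period, using $|B(j)|\le \beta p_{ij}$.

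The third step is to bound the number $T_i$ of maximal constancy intervals of $u^{(i)}$ linearly in~$N_i$ when $N_i\ge 1$. Every discontinuity of $u^{(i)}$ is caused by an admission, the end of some $S(j)$, or the start or end of a piece of some $B(j)$; each admitted job creates only a constant number of such events, and although a new admission may split one already-existing blocking-period piece into two, this still contributes only a constant number of extra discontinuities per admission. When instead $N_i=0$, we have $u^{(i)}\equiv\infty$ throughout, so \ref{enum:alg:threshold} combined with the fact that every $\opt$-job scheduled on~$i$ is available for~$i$ at some point in time forces $X_i=\emptyset$. This is the step that needs the most care, since the blocking algorithm rewrites scheduling intervals and blocking periods whenever a new job is admitted, and one has to argue that these updates do not blow up the count of discontinuities beyond a small constant factor.

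Combining the previous steps with \Cref{theo:otm:Charging} yields, for any machine with $N_i\ge 1$, the bound $|X_i|\le \epsdeltafrac\bigl(2\beta+\frac{1+2\delta}{\gamma}\bigr)N_i + T_i\le(\alpha+4)N_i$, while $|X_i|=0$ when $N_i=0$. Summing over machines gives $|X|\le(\alpha+4)|\alg|$, and $|\opt|\le|X|+|\alg|\le(\alpha+5)|\alg|$ concludes the argument.
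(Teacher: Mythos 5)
Your proposal is correct and mirrors the paper's own argument step for step: verify that the blocking algorithm satisfies \ref{enum:alg:available} and \ref{enum:alg:threshold}, apply \Cref{theo:otm:Charging} machine by machine, charge the threshold integral to admitted jobs via their scheduling intervals (giving $\frac{1+2\delta}{\gamma}$ per job by \Cref{lem:com:LengthSchedulingInterval}) and blocking periods (giving $2\beta$ per job since $|B(j)|\le\beta p_{ij}$), bound the number of constancy intervals by a small multiple of the number of admissions, handle the empty-machine case separately, and sum. The paper likewise simply asserts that each admission adds at most three breakpoints, so that $|\bari|\le 3|\barj|+1$; your caution about the rewriting of scheduling intervals and blocking periods is reasonable but does not constitute a different route.
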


\begin{proof}
	We fix an instance and an optimal solution \opt. We use~$X$ to denote the set of jobs in \opt that the blocking algorithm did not admit. Without loss of generality, we can assume that all jobs in \opt complete on time. If~$J$ is the set of jobs admitted by the blocking algorithm, then~$X \cup J$ is a superset of the jobs successfully finished in the optimal solution. Hence, showing~$|X| \leq 	(\alpha + 4)|J|$ suffices to prove \cref{theo:com:AdmitEnoughJobs}. 

For each machine~$i$, we compare the throughput of the optimal solution to the throughput on machine~$i$ of the blocking algorithm. More precisely, let~$\barx \subseteq X $ be the jobs in \opt scheduled on machine~$i$ and let~$\barj \subseteq J$ be the jobs scheduled by the blocking algorithm on machine~$i$. With \cref{theo:otm:Charging}, we show~$|\barx| \leq(\alpha + 4) |\barj|$ to bound the cardinality of~$X$ in terms of~$|J|$.
	
	To this end, we retrospectively consider the interval structure created by the algorithm on machine~$i$. Let~$\bari$ be the set of maximal intervals~$I_t = [\tau_t, \tau_{t+1})$ such that~$u_{i,\tau} = u_{i,\tau_t}$ for all~$\tau \in I_t$. We define~$\thresh = u_{i,\tau_t}$ for each interval~$I_t$. As discussed above, the time points~$\tau_t$ for~$t \in [T]$ correspond to the admission, the end of a scheduling interval, and the start as well as the end of a blocking period of jobs admitted to machine~$1$. As the admission of a job adds at most three time points, we have that~$|\bari| \leq 3 |\barj| + 1$. 
	
	As the blocking algorithm satisfies Properties~\ref{enum:alg:available} to~\ref{enum:alg:threshold}, we can apply \cref{theo:otm:Charging} to obtain 
	\[
	|\barx| \leq \sum_{t = 1}^T \epsdeltafrac\frac{\tau_{t+1} - \tau_t}{\thresh} + |\bari| \leq \sum_{t = 1}^T \epsdeltafrac\frac{\tau_{t+1} - \tau_t}{\thresh} + (3|\barj| +1 ).
	\]
	It remains to bound the first part in terms of~$|\barj|$. If~$\thresh < \infty$, let~$j_t\in \barj$ be the \emph{smallest} job~$j$ with~$\tau_t \in S(j) \cup B(j)$. Then, at most~$\epsdeltafrac \frac{\tau_{t+1} - \tau_t}{\thresh}$ (potentially fractional) jobs will be charged to job~$j_t$ because of interval~$I_t$. By definition of~$\thresh$, we have~$\thresh = \gamma p_{ij_t}$ if~$I_t \subseteq S(j_t)$, and if~$I_t \subseteq B(j_t)$, we have~$\thresh = \frac12p_{ij_t}$. The total length of intervals~$I_t$ for which~$j=j_t$ holds sums up to at most~$(1+2\delta)p_{ij}$ for~$I_t \subseteq S(j)$ and to at most~$2\beta p_{ij}$ for~$I_t \subseteq B(j)$. Hence, in total, the charging scheme assigns at most~$\epsdeltafrac ( 2 \beta + \frac{1+2\delta}{\gamma} ) = \alpha $ jobs in~$\barx$ to job~$j \in \barj$. Therefore, 
	\begin{equation*}
	|\barx| \leq \big( \alpha + 3 \big) |\barj| +1.	
	\end{equation*}	
	If~$\barj = \emptyset$, the blocking algorithm admitted all jobs scheduled on machine~$i$ by \opt, and~$|\barx| = 0 =  |\barj|$ follows. Otherwise, $
	|\barx| \leq \big( \alpha + 4 \big) |\barj|$, and we obtain 
	\[
	|\opt| \leq |X \cup J| = \sum_{i=1}^m |\barx| + |J| \leq \sum_{i=1}^m (\alpha +4) |\barj| + |J| \leq (\alpha + 5) |J|,
	\]	
	which concludes the proof. 
\end{proof}

\subsection{Finalizing the proof of \Cref{theo:com:UB}}
\begin{proof}[Proof of \cref{theo:com:UB}]
	In \Cref{theo:com:CompleteAll} we show that the blocking algorithm completes all admitted jobs~$J$ on time. This implies that the blocking algorithm is feasible for the model commitment upon admission. As no job~$j \in J$ is admitted later than~$d_j - (1+\delta)p_{ij}$, the blocking algorithm also solves scheduling with~$\delta$-commitment.
	In \Cref{theo:com:AdmitEnoughJobs}, we bound the throughput~$|\opt|$ of an optimal non-migratory solution by $\alpha+5$ times~$|J|$, the throughput of the blocking algorithm, where $\alpha=\frac{\epsilon}{\eps -\delta}(2\beta + \frac{1+2\delta}{\gamma})$. 
	Our choice of parameters $\beta = \tfrac{16}{\delta}$ and $\gamma = \tfrac{\delta}{16}$ implies that the blocking algorithm achieves a competitive ratio of $c \in \OO \big(\frac{\eps}{(\eps -\delta)\delta} \big)$. For commitment upon arrival or for~$\delta$-commitment in the case where~$\delta \leq \frac\eps2$, we run the algorithm with~$\delta'= \frac\eps2$. Hence,~$c \in \OO\big(\frac{1}{\eps - \delta'}\big) = \OO\big(\frac1\eps\big)$. If~$\delta > \frac\eps2$, then we set~$\delta'= \delta$ in our algorithm. Thus,~$\frac{\eps}{\delta'} \in \OO(1)$ and, again,~$c \in \OO\big(\frac{1}{\eps - \delta'}\big)$.	
\end{proof}

\section{Scheduling without commitment}\label{sec:RegionAlg}

This section considers online throughput maximization without commitment requirements. We show how to exploit also in this setting our key lemma on the size of non-admitted jobs for a big class of online algorithms and the resulting upper bound on the throughput of an optimal (offline) non-migratory algorithm  from~\Cref{sec:otm:ClassAlgorithms}. 

We consider the \emph{region algorithm} that was designed by~\cite{ChenEMSS2020} for scheduling on a single machine and we generalize it to parallel identical machines. We prove that it has a competitive ratio of~$\OO\big(\frac1\eps\big)$, which is best possible on a single machine and improves substantially upon the best previously known parallel-machine algorithm (for weighted throughput) with a competitive ratio of $\OO\big(\frac{1}{\eps^2}\big)$ by Lucier et al.~\cite{LucierMNY13}. For a single machine, this  matches the guarantee proven in~\cite{ChenEMSS2020}. However, our new analysis is much more direct.

\subsection{The region algorithm}

Originally, the region algorithm was designed for online scheduling with and without commitment on a single machine. We extend it to unrelated machines by never migrating jobs between machines and per machine using the same design principles that guide the admission decisions of the region algorithm, as developed in~\cite{ChenEMSS2020}. Since we do not consider commitment in this section, we can significantly simplify the exposition of the region algorithm when compared to~\cite{ChenEMSS2020}. 

As in the previous section, a job is only processed by the machine it initially was started on. We say the job has been \emph{admitted} to this machine. 
Moreover, a running job can only be preempted by significantly smaller-size jobs, i.e., smaller by a factor of at least~$\frac{\eps}{4}$ with respect to the processing time, and a job~$j$ cannot start for the first time on machine~$i$ when its remaining slack is too small, i.e., less than~$\frac{\eps}{2}p_{ij}$.

Formally, at any time~$\tau$, the region algorithm maintains two sets of jobs: \textit{admitted jobs}, which have been started before or at time $\tau$, and \textit{available jobs}.  A job $j$ is available for machine~$i$ if it is released before or at time~$\tau$, is not yet admitted, and~$\tau$ is not too close to its deadline, i.e., 
$r_j \leq \tau$ and  $d_j - \tau \geq \big(1 + \frac\eps2\big) p_{ij}$. The intelligence of the \region lies in how it admits jobs. The actual scheduling decision then is simple
and independent of the admission of jobs: at any point in time and on each machine, schedule the shortest job that has been admitted to this machine and has not completed its processing time. In other words, we schedule admitted jobs on each machine in \textsc{Shortest Processing Time} (SPT) order. The \region never explicitly considers deadlines except when deciding whether to admit jobs. In particular, jobs can even be processed after their deadline. 

At any time~$\tau$, when there is a job~$j$ available for an \emph{idle} machine~$i$, i.e.,~$i$ is not processing any previously admitted job~$j'$, the shortest available job~$\newjob$ is immediately admitted to machine~$i$ at time $a_\newjob := \tau$. There are two events that trigger a decision of the \region: the release of a job and the completion of a job. If one of these events occurs at time~$\tau$, the \region invokes the {preemption} subroutine. This routine iterates over all machines and compares the processing time of the smallest job~$\newjob$ \textit{available} for machine~$i$ with the processing time of job~$j_i$ that is currently scheduled on machine~$i$. If~$p_{i\newjob} < \frac\eps4 p_{ij_i}$, job~$\newjob$ is admitted to machine~$i$ at time~$a_\newjob := \tau$ and, by the above scheduling routine, immediately starts processing. We summarize the region algorithm in Figure~\ref{alg:regionalg}.

\begin{figure}
\caption{Region algorithm}
\label{alg:regionalg}
\begin{algorithmic}
\STATE{\textbf{Scheduling Routine:} At any time $\tau$ and on any machine $i$, run the job with shortest processing time that has been admitted to $i$ and has not yet completed. \smallskip}
\STATE{\textbf{Event:} Release of a new job at time $\tau$}
	\STATE{\hspace{1em}Call Threshold Preemption Routine.\smallskip} 	
\STATE{\textbf{Event:} Completion of a job at time $\tau$}
	\STATE{\hspace{1em}Call threshold preemption routine.\smallskip} 
\STATE{\textbf{Threshold Preemption Routine:}}
\STATE{$i \assign 1$}
\STATE{$\newjob \assign$ a shortest job available for machine $i$ at $\tau$, i.e., $\newjob \in \arg\min\{p_{ij} \,|\, j \in \jobs, r_j \leq \tau \text{ and } d_j - \tau \geq (1 + \frac{\eps}{2}) p_{ij}\}$}
\WHILE{$i \leq m$}
	\STATE{$j \assign$ job processed on machine $i$ at time $\tau$}
	\IF{$j = \emptyset$}
		\STATE{admit job $\newjob$ to machine $i$}
		\STATE{call Threshold Preemption Routine}
	\ELSIF{$p_{i\newjob} < \frac\eps4 p_{ij}$}
		\STATE{admit job $\newjob$ to machine $i$}
		\STATE{call Threshold Preemption Routine}
	\ELSE
		\STATE{$i \assign i+1$}
		\STATE{$\newjob \assign$ a shortest job available for machine $i$ at $\tau$, i.e., $\newjob \in \arg\min\{p_{ij} \,|\, j \in \jobs, r_j \leq \tau \text{ and } d_j - \tau \geq (1 + \frac{\eps}{2}) p_{ij}\}$}
	\ENDIF
\ENDWHILE
\end{algorithmic}
\end{figure}

The proof of the analysis splits again naturally into two parts: The first part is to show that the region algorithm completes at least half of all admitted jobs, and the second is to use \cref{theo:com:AdmitEnoughJobs} to compare the number of admitted jobs to the throughput of an optimal non-migratory algorithm.

\subsection{Completing sufficiently many admitted jobs}

The main result of this section is the following theorem. 

\begin{theorem}\label{theo:otm:CompleteHalf}
	Let $0< \eps \leq 1$. Then the \region completes at least half of all admitted jobs before their deadline.
\end{theorem}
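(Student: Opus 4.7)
The plan is to prove the theorem machine by machine, since the region algorithm never migrates: it suffices to show on each machine $i$ that at least half of the jobs admitted to $i$ complete by their own deadline. On a fixed machine $i$ I would set up the \emph{admission forest} $F_i$ on the jobs admitted to $i$: declare $j$ to be the parent of $k$ if $j$ was the job being processed on $i$ at time $a_k$, and $k$ is a root if $i$ was idle at $a_k$. Write $c(T)$ and $f(T)$ for the numbers of jobs of a subtree $T\subseteq F_i$ that meet or miss their own deadlines, respectively. The per-machine target $c(F_i)\geq f(F_i)$ summed over machines yields the theorem.

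Three structural facts drive the analysis. First, any descendant $k$ of $j$ in $F_i$ satisfies $p_{ik}<\tfrac{\eps}{4}\,p_{ij}$: the preemption rule demands an $\tfrac{\eps}{4}$-factor size gap with respect to the currently running job, and by SPT the running job is no larger than any admitted-incomplete job on $i$, in particular no larger than $j$ itself; iterating along the ancestor chain gives the claim. Second, every leaf of $F_i$ completes by its deadline, because once a leaf $\ell$ is admitted, no smaller-size admission happens on $i$ afterwards (otherwise $\ell$ would have a descendant), so by SPT $\ell$ runs uninterruptedly from $a_\ell$ and finishes at $a_\ell+p_{i\ell}\leq d_\ell$ thanks to the $\eps/2$-slack guaranteed at admission. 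Third, if $j$ fails, then within its window $[a_j,d_j)$ the machine processes only $j$ and its descendants (larger admitted-incomplete jobs lose to $j$ under SPT), and since $j$ itself receives strictly less than $p_{ij}$ processing time out of $\geq(1+\tfrac{\eps}{2})p_{ij}$, descendants of $j$ receive strictly more than $\tfrac{\eps}{2}\,p_{ij}$ of processing in $[a_j,d_j)$; combined with the first fact, this forces at least three distinct descendants of $j$ to receive positive processing on machine $i$.

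I would then prove by strong induction on $|T|$ the strengthened invariant $c(T)\geq f(T)+1$ for every subtree $T$ rooted at any $r\in F_i$, which implies the per-machine claim. If the root $r$ of $T$ completes, the invariant follows by applying the inductive hypothesis to $r$'s child subtrees and adding one for $r$. If $r$ fails we must exhibit enough excess among $r$'s children to absorb $r$'s own failure; the key sub-claim is that a failing single-child chain cannot persist indefinitely. Indeed, if $j$ fails and has just one child $k$, then by the third fact the single subtree $T_k$ must eat more than $\tfrac{\eps}{2}\,p_{ij}$ of processing during $[a_j,d_j)$, which, since $p_{ik}<\tfrac{\eps}{4}\,p_{ij}$, forces $k$ to have many of its own descendants; descending this chain one eventually encounters either a completing node (supplying the extra excess via Case~1 of the induction) or a failing node with branching, whose multiple child subtrees each contribute their own inductive excess and together absorb the accumulated failures above.

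The main obstacle is making the single-child-failing chain argument quantitatively tight: one has to strengthen the inductive invariant so that the excess accumulated at the branching or completing point at the bottom of the chain suffices to cover every failure along the chain. The volume bound from the third fact, which actually forces strictly more than $4/\eps$ distinct descendants to appear two levels below a failing single-child node, provides the necessary slack; translating this geometric blow-up into the inductive bookkeeping, and verifying that the injection from failing to completing jobs it implicitly defines is indeed injective, is the delicate point of the proof.
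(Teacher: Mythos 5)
Your overall strategy --- reduce to a single machine, build the admission forest with parent $=$ job running at the child's admission time, and prove a counting bound by structural induction on subtrees --- matches the paper's approach closely. The paper also first reduces to the jobs actually admitted (\Cref{lem:otm:RestrictJobs}) and then to a single machine (\Cref{lem:otm:ReduceToOneMachine}). Your three structural facts are correct: the $\eps/4$ geometric decay of processing times along ancestor chains, the completion of every leaf (it runs uninterrupted from $a_\ell$ to $a_\ell + p_{i\ell} \leq d_\ell$ by the $\eps/2$ slack at admission), and the fact that a failing $j$ must have its descendants absorb strictly more than $\tfrac{\eps}{2}p_{ij}$ of processing in $[a_j,d_j)$, hence at least three distinct descendants receive processing.

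The gap is exactly where you flag it, but it is a genuine gap, not merely a delicate step left to the reader. Your proposed invariant $c(T) \geq f(T) + 1$ does not close under the inductive case of a failing root with a single child: if $j$ fails and has one child $k$, then $f(T_j) = 1 + f(T_k)$ and $c(T_j) = c(T_k) \geq f(T_k)+1 = f(T_j)$, giving only $c(T_j) \geq f(T_j)$, short by one. The ``eventually the chain branches or terminates in a leaf'' heuristic is true, but the excess unit produced there must be propagated back up through every failing single-child node above it, and for that you need a \emph{quantitative} strengthening of the invariant (something that scales with the accumulated processing-time deficit of the chain), not the constant $+1$. The paper obtains exactly such a quantitative statement by citing Lemma~3 of~\cite{ChenEMSS2020}, restated here as \Cref{lem:otm:F>U}: if $C_j - a_j \geq (\ell+1)p_{ij}$ then the finished minus unfinished count among $j$'s children and grandchildren is at least $\lfloor 4\ell/\eps \rfloor$. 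Applying this with $\ell=\eps/2$ to each maximal unfinished job gives the factor $2$ immediately. To complete your proof you would have to either invoke that lemma or reprove it; the volume argument you sketch (descendants at depth $\geq 2$ below a failing single-child node must number more than $4/\eps$) is the right germ, but turning it into an invariant that telescopes along failing chains is precisely the content of \Cref{lem:otm:F>U}, and your proposal stops short of establishing it.
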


The proof of \cref{theo:otm:CompleteHalf} relies on two technical results that enable us to restrict to instances with one machine and further only consider jobs that are admitted by the \region in this instance. Then, we can use the analysis of the region algorithm in~\cite{ChenEMSS2020} to complete the proof. 

We start with the following observation. Let~$\I$ be an instance of online throughput maximization with the job set~$\jobs$ and let~$J\subseteq \jobs$ be the set of jobs admitted by the \region at some point. It is easy to see that a job~$j \notin J$ does not influence the scheduling or admission decisions of the \region. The next lemma formalizes this statement and follows immediately from the just made observations. 

\begin{lemma}\label{lem:otm:RestrictJobs}
	For any instance~$\I$ for which the \region admits the job set~$J\subseteq \jobs$, the reduced instance~$\I'$ containing only the jobs~$J$ forces the \region with consistent tie breaking to admit all jobs in~$J$ and to create the same schedule as produced for the instance~$\I$. 
\end{lemma}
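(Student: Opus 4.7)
The plan is to run the \region on $\I$ and on $\I'$ in parallel with consistent tie breaking, and show by induction on event times in $\I$ that at every moment the state of the algorithm (admitted jobs, and job currently scheduled on each machine) is identical in the two instances when restricted to jobs in $J$. Since scheduling on admitted jobs is deterministic SPT, identical states imply identical admission times, identical machine assignments, and identical schedules, which gives both conclusions of the lemma.

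The first step is to observe that in the run on $\I$, a job $j \notin J$ is never admitted, hence never processed, and so never completes. Consequently the only events in $\I$ involving non-$J$ jobs are their releases; every other event (release of a $J$-job or completion of a $J$-job) occurs also in $\I'$ at exactly the same time. At each such common event, the inductive hypothesis ensures that both runs see the same state, so the Threshold Preemption Routine iterates over the machines in the same order and makes identical admission decisions.

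The key sub-claim is that at a release event of a job $j \notin J$ in $\I$ (which is not an event in $\I'$), the routine admits no job, leaving the state unchanged. Suppose toward contradiction that some job $k$ is admitted to a machine $i$ at this event. Then $k \neq j$ because $j \notin J$, so $k$ was already available just before $r_j$ and must in particular be a shortest job available for machine $i$ at time $r_j$. However, the running job on $i$ and the admission criterion $p_{ik} < \frac{\eps}{4} p_{ij_i}$ are unchanged since the previous event in $\I$, because running jobs are only updated at events, and elapsing time can only shrink the available set by tightening deadlines. So $k$ would have been admitted at the previous event already, contradicting the assumption that it is newly admitted at $r_j$.

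The subtle point, and the main obstacle, lies in how the arrival of $j$ interacts with the per-machine iteration of the routine. On any machine where $j$ now becomes the shortest available job, the routine considers $j$ first, fails to admit it (since $j \notin J$), and moves on without ever examining the next-shortest job; on any machine where $j$ is not the shortest available, the shortest available job and the running job are unchanged. In either case, the routine's per-machine behavior matches what it did at the previous event, so no new admissions occur. This establishes the sub-claim and completes the induction.
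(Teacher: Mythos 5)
Your proof is correct and formalizes, by induction on events, the same observation the paper makes informally (and leaves essentially unproved): that jobs never admitted in $\I$ cannot influence the region algorithm's admission or scheduling decisions, so deleting them changes nothing. One small imprecision worth tightening in the contradiction step: the claim that ``$k$ would have been admitted at the previous event already'' silently assumes $k$ was also the shortest job available for machine~$i$ at the previous event, which need not hold; the correct conclusion is that whichever job $k'$ was shortest available then satisfies $p_{ik'} \leq p_{ik}$ (as availability only shrinks between releases) and failed the threshold test $p_{ik'} < \tfrac{\eps}{4}p_{ij_i}$, so $p_{ik} \geq p_{ik'} \geq \tfrac{\eps}{4}p_{ij_i}$ and $k$ fails as well, yielding the desired contradiction.
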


The proof of the main result compares the number of jobs finished on time,~$F \subseteq J$, to the number of jobs unfinished by their respective deadlines,~$U = J \setminus F$. To further simplify the instance, we use that the \region is non-migratory and restrict to single-machine instances. To this end, let~$F^{(i)}$ and~$U^{(i)}$ denote the finished and unfinished, respectively, jobs on machine~$i$.

\begin{lemma}\label{lem:otm:ReduceToOneMachine}
	Let~$i \in \{1,\ldots,m\}$. There is an instance~$\I'$ on one machine with job set~$\jobs' = F^{(i)} \cup U^{(i)}$. Moreover, the schedule of the \region for instance~$\I'$ with consistent tie breaking is identical to the schedule of the jobs~$\jobs'$ on machine~$i$. In particular,~$ F'= F^{(i)}$ and~$U'= U^{(i)}$. 
\end{lemma}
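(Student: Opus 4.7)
The plan is to peel the instance down in two stages and then argue by induction that the schedules coincide.

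First, I apply Lemma~\ref{lem:otm:RestrictJobs} to the multi-machine instance to reduce to the instance $\I_J$ whose job set is exactly $J=\bigcup_{k=1}^m(F^{(k)}\cup U^{(k)})$, the jobs admitted by the \Region. On $\I_J$ the algorithm (with consistent tie breaking) produces the same schedule as on $\I$, so in particular every $j\in J$ is admitted to the unique machine it was admitted to on $\I$. Now I define $\I'$ as the single-machine instance whose job set is $\jobs'=F^{(i)}\cup U^{(i)}$, with processing time $p_{ij}$ and unchanged $r_j,d_j$ for each $j\in\jobs'$. I will show that the schedule the \Region produces on $\I'$ coincides with the schedule of jobs $\jobs'$ on machine $i$ in $\I_J$; this immediately gives $F'=F^{(i)}$ and $U'=U^{(i)}$.

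The argument is an induction over time on the invariant that (i) the set of jobs admitted to machine $i$ and not yet completed is the same in $\I_J$ and in $\I'$ at every time $\tau$, and (ii) the running job on machine $i$ is therefore the same in both processes (since both use SPT on that set). The invariant certainly holds before the first event. For the inductive step I consider every event in $\I_J$ that could change machine $i$'s state: a completion on $i$ or a release of a job in $\jobs^{(i)}$ is also an event in $\I'$ and, by the inductive hypothesis, it triggers the preemption routine with an identical ``shortest job available for machine $i$'' — call it $j^{\star}$, which must lie in $\jobs^{(i)}$ — so the test ``machine idle, or $p_{ij^{\star}}<\tfrac{\eps}{4}p_{ij}$'' evaluates identically in both instances and the resulting admission (or lack thereof) agrees.

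It remains to rule out that the events living only in $\I_J$ — releases of jobs in $\jobs^{(k)}$ for $k\neq i$ and the recursive calls that admissions on other machines trigger — cause any admission on machine $i$ in $\I_J$ that is not mirrored in $\I'$. Suppose such a call reaches machine $i$ and picks a shortest available job $j^{\star}$. If $j^{\star}\in\jobs^{(i)}$, the same admission happens in $\I'$ by the inductive hypothesis. If $j^{\star}\notin\jobs^{(i)}$, then $j^{\star}$ was not ultimately admitted to $i$ in $\I$, hence the admission criterion must fail at this moment, forcing $p_{ij^{\star}}\geq\tfrac{\eps}{4}p_{ij}$ for the job $j$ currently running on $i$; any shortest job available in $\I'$ at the same instant has $p_{ij'}\geq p_{ij^{\star}}$ and therefore also fails the criterion, so no admission happens in $\I'$ either. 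Consistent tie breaking and the \Region's deterministic event-by-event processing then guarantee equal admission timestamps on machine $i$, preserving the invariant. The main obstacle is precisely this last bookkeeping step — checking that the recursive re-entries of the preemption routine after other-machine admissions do not silently break the invariant — but the monotone comparison $p_{ij'}\ge p_{ij^{\star}}$ above is exactly what is needed, and Lemma~\ref{lem:otm:RestrictJobs} already rules out interference from jobs in $\jobs\setminus J$.
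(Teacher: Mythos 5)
Your proposal is correct and follows essentially the same route as the paper: first restrict via Lemma~\ref{lem:otm:RestrictJobs} to the admitted jobs, then induct over events to show the schedule on machine $i$ is reproduced on the single machine, using that the region algorithm is non-migratory and that removing non-machine-$i$ jobs can only raise the size of the shortest available job (so the admission test never newly passes). You spell out more explicitly than the paper that events originating on other machines cannot cause a spurious admission to machine $i$, but this is the same underlying argument, not a different approach.
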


\begin{proof}
	By \Cref{lem:otm:RestrictJobs}, we can restrict to the jobs admitted by the \region. Hence, let~$\I$ be such an instance with~$F^{(i)}\cup U^{(i)}$ being admitted to machine~$i$. 
	As the \region is non-migratory, the sets of jobs scheduled on two different machines are disjoint. Let~$\I'$ consist of the jobs in~$\jobs':= F^{(i)} \cup U^{(i)}$ and one machine. We set~$p_j'= p_{ij}$ for~$j \in \jobs'$. The \region on instance~$\I$ admits all jobs in~$\jobs$. In particular, it admits all jobs in~$\jobs'$ to machine~$i$. 
	
	We inductively show that the schedule for the instance~$\I'$ is identical to the schedule on machine~$i$ in instance~$\I$. To this end, we index the jobs in~$\jobs'$ in increasing admission time points in instance~$\I$. 
	
	It is obvious that job~$1 \in \jobs'$ is admitted to the single machine at its release date~$r_1$ as happens in instance~$\I$ since the \region uses consistent tie breaking. Suppose that the schedule is identical until the admission of job~$\newjob$ at time~$a_\newjob=\tau$. If~$\newjob$ does not interrupt the processing of another job, then~$\newjob$ will be admitted at time~$\tau$ in~$\I'$ as well. Otherwise, let~$j \in \jobs'$ be the job that the \region planned to process at time~$\tau$ \emph{before} the admission of job~$\newjob$. Since~$\newjob$ is admitted at time~$\tau$ in~$\I$,~$\newjob$ is available at time~$\tau$, satisfies~$p_{\newjob}' = p_{i\newjob} < \frac\eps4 p_{ij}  = \frac\eps4 p_j'$, and did not satisfy both conditions at some earlier time~$\tau'$ with some earlier admitted job~$j'$. Since the job set in~$\I'$ is a subset of the jobs in~$\I$ and we use consistent tie breaking, no other job~$j^* \in \jobs'$ that satisfies both conditions is favored by the \region over~$\newjob$. Therefore, job~$\newjob$ is also admitted at time~$\tau$ by the \region in instance~$\I'$. Thus, the schedule created by the \region for~$\jobs'$ is identical to the schedule of~$\jobs$ on machine~$i$ in the original instance. 
\end{proof}

For proving \cref{theo:otm:CompleteHalf}, we consider a worst-case instance for the \region where ``worst'' is with respect to the ratio between admitted and successfully completed jobs. 
Since the \region is non-migratory, there exists at least one machine in such a worst-case instance that ``achieves'' the same ratio as the whole instance. By the just proven lemma, we can find a worst-case instance on a single machine. However, on a single machine, the \region algorithm in this paper is identical to the algorithm designed in~\cite{ChenEMSS2020}. Therefore, we simply follow the line of proof developed in~\cite{ChenEMSS2020} to show \cref{theo:otm:CompleteHalf}. 

More precisely, in~\cite{ChenEMSS2020} we show that the existence of a late job~$j$ implies that the the set of jobs admitted by~$j$ or by one of its children contains more finished than unfinished jobs. Let~$F_j$ denote the set of jobs admitted by~$j$ or by one of its children that {\em finish on time}. Similarly, we denote the set of such jobs that complete after their deadlines, i.e., that are {\em unfinished at their deadline}, by~$U_j$.
We restate the following lemma, which was originally shown in a single-machine environment but clearly also holds for unrelated machines.

\begin{lemma}[Lemma 3 in \cite{ChenEMSS2020}]\label{lem:otm:F>U}
	Consider some job $j$ admitted to some machine $i\in\{1,\dots,m\}$. If~$C_j - a_j \geq (\ell + 1) p_{ij}$ for~$\ell > 0$, then~$|F_j| - |U_j| \geq \lfloor \frac{4\ell}{\eps}\rfloor$. 
\end{lemma}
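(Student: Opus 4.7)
The plan is to argue by induction on the depth of the admission subtree rooted at $j$, mirroring the proof for the single-machine case in \cite{ChenEMSS2020}: since $j$ is admitted to machine $i$ and never migrates, only the jobs admitted to $i$ during $[a_j, C_j)$ enter the picture, so the extension from one machine to the unrelated-machine setting is essentially formal.

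For the base case ($j$ has no descendants), we have $C_j = a_j + p_{ij}$, so $\ell = 0$, both $F_j$ and $U_j$ are empty, and the bound $\lfloor 0 \rfloor = 0$ holds. For the inductive step, let $k_1, \ldots, k_s$ be the children of $j$ in the admission tree on machine $i$, and set $\ell_q := (C_{k_q} - a_{k_q})/p_{ik_q} - 1 \geq 0$. Since $j$ runs under SPT and accrues exactly $p_{ij}$ units of processing within its window, the children's admission intervals cover exactly the remaining $\ell\, p_{ij}$ time, yielding $\sum_q (\ell_q + 1)\, p_{ik_q} = \ell\, p_{ij}$. Combined with the admission condition $p_{ik_q} < (\eps/4)\, p_{ij}$, this gives the key inequality
$$\sum_{q=1}^{s} (\ell_q + 1) \;>\; \frac{4\ell}{\eps}.$$

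Applying the induction hypothesis to each child yields $|F_{k_q}| - |U_{k_q}| \geq \lfloor 4\ell_q/\eps \rfloor$. Availability of $k_q$ at its admission time forces $d_{k_q} - a_{k_q} \geq (1 + \eps/2)\, p_{ik_q}$, so every \emph{late} child necessarily satisfies $\ell_q > \eps/2$ and hence $\lfloor 4\ell_q/\eps \rfloor \geq 2$. Decomposing
$$|F_j| - |U_j| \;=\; \sum_{q=1}^{s} \bigl( \mathrm{sgn}(k_q) + |F_{k_q}| - |U_{k_q}| \bigr),$$
where $\mathrm{sgn}(k_q) = +1$ if $k_q$ finishes on time and $-1$ otherwise, one sees that every summand is at least $1$: for on-time children from the $+1$ alone, and for late children because the IH's $\lfloor 4\ell_q/\eps \rfloor \geq 2$ outweighs the $-1$.

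The remaining step combines these per-child contributions with the key inequality to derive $|F_j| - |U_j| \geq \lfloor 4\ell/\eps \rfloor$. The argument splits naturally into two regimes: when $s$ is large (at least $\lfloor 4\ell/\eps \rfloor$), the simple ``one per child'' bound already closes the case; when $s$ is small, the identity $\ell p_{ij} = \sum_q (\ell_q + 1) p_{ik_q}$ forces one (or a few) children to carry a large $\ell_q$, and the $\lfloor 4\ell_q/\eps \rfloor$ terms on those children supply the missing count. The main technical obstacle I expect is the careful floor-function bookkeeping in this last step: the naive estimate $\lfloor x \rfloor \geq x - 1$ loses too much when summed, so one has to argue separately about children with tiny $\ell_q$ (where the $+1$ from $\mathrm{sgn}$ already beats the fraction of $4\ell/\eps$ attributable to them via $4 p_{ik_q}/(\eps p_{ij}) < 1$) and those with larger $\ell_q$ (where the IH term $\lfloor 4\ell_q/\eps \rfloor$ dominates).
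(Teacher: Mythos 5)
The paper does not prove this statement at all: it imports it verbatim from \cite{ChenEMSS2020} (``We restate the following lemma, which was originally shown in a single-machine environment but clearly also holds for unrelated machines.''). The only work the paper does around it is the reduction to a single machine via Lemmas~\ref{lem:otm:RestrictJobs} and~\ref{lem:otm:ReduceToOneMachine}, so there is nothing in this paper to compare your argument against; you are effectively reproving the cited single-machine lemma from scratch.

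Your skeleton is the right one and much of it is sound: the induction on the admission subtree, the identity $\sum_q (\ell_q+1)p_{ik_q} = \ell p_{ij}$ for the preemption time, the consequence $\sum_q(\ell_q+1) > 4\ell/\eps$ from $p_{ik_q} < (\eps/4)p_{ij}$, and the observation that a late child has $\ell_q > \eps/2$ and hence $\lfloor 4\ell_q/\eps\rfloor \ge 2$. But the step you explicitly leave open (``the careful floor-function bookkeeping in this last step'') is not mere bookkeeping; with the per-child bounds you actually wrote down, the induction does \emph{not} close. Concretely, take $\eps = 1$, $p_{ij}=1$, and two late children $k_1,k_2$ each with $p_{ik_q}$ just below $1/4$ and $\ell_q$ just above $1/2$ (so $\lfloor 4\ell_q/\eps\rfloor = 2$). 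Then the preemption time $\ell = \sum_q(\ell_q+1)p_{ik_q}$ can reach $0.755^-$, so $\lfloor 4\ell/\eps\rfloor = 3$, while your bound gives $\sum_q\bigl(\operatorname{sgn}(k_q) + \lfloor 4\ell_q/\eps\rfloor\bigr) = 2\cdot(-1+2) = 2 < 3$. The lemma is not falsified---the inductive hypothesis $|F_{k_q}|-|U_{k_q}|\ge\lfloor 4\ell_q/\eps\rfloor$ is simply loose in this regime (counting descendants of $k_q$ directly forces $|F_{k_q}|-|U_{k_q}|\ge 3$ here)---but it shows that the straightforward ``one per on-time child, $\lfloor 4\ell_q/\eps\rfloor-1$ per late child'' accounting you propose cannot reach $\lfloor 4\ell/\eps\rfloor$. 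Closing this requires a genuinely stronger invariant (e.g.\ tracking $|F_j|+|U_j|$ alongside $|F_j|-|U_j|$, or an integrality/parity argument), which your sketch neither identifies nor supplies.

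A smaller point: your base case sets $\ell=0$ and appeals to $\lfloor 0\rfloor=0$, but the lemma only asserts something for $\ell>0$, so the correct base observation is that a childless $j$ has $C_j-a_j = p_{ij}$ and hence no $\ell>0$ satisfies the hypothesis (vacuous case). That is a cosmetic issue; the unproved combinatorial step is the substantive gap.
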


\begin{proof}[Proof of \cref{theo:otm:CompleteHalf}]
	Let $U$ be the set of jobs that are unfinished by their deadline but whose ancestors have all completed on time. Every job $j\in U$ was admitted by the algorithm at some time $a_j$
	with $d_j - a_j \geq \big(1+\frac\eps2\big) p_{ij}$. Since~$j$ is unfinished, we have~$C_j - a_j > d_j - a_j \geq \big(1+\frac{\eps}{2}\big) p_{ij}$. By \Cref{lem:otm:F>U}, $|F_j| - |U_j| \geq \big\lfloor \frac{4\cdot \eps/2}{\eps} \big\rfloor = 2 $. Thus, 
	\[
	|F_j| + |U_j| \leq 2|F_j| - 2 < 2|F_j|.
	\]
	Since every ancestor of such a job~$j$ finishes on time, this completes the proof. 
\end{proof}

\subsection{The region algorithm admits sufficiently many jobs}

In this section, we show the following theorem and give the proof of \cref{theo:otm:UB-no}.  

\begin{theorem}\label{theo:otm:AdmitEnoughJobs}
	An optimal non-migratory (offline) algorithm completes at most a factor~$\big(\frac{8}{\eps} + 4\big)$ more jobs on time than admitted by the \region. 
\end{theorem}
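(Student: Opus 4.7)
The plan is to verify that the \region satisfies Properties~\ref{enum:alg:available} and~\ref{enum:alg:threshold} of \Cref{sec:otm:ClassAlgorithms} with $\delta=\eps/2$, and then mimic the proof of \cref{theo:com:AdmitEnoughJobs} by applying \cref{theo:otm:Charging} to each machine separately.

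First I would define the retrospective machine-$i$ threshold as $u_{i\tau}=\tfrac{\eps}{4}\,p_{ij_i(\tau)}$ whenever the \region is running an admitted job $j_i(\tau)$ on machine $i$ at time $\tau$, and $u_{i\tau}=\infty$ whenever machine $i$ is idle in the \region's schedule. Property~\ref{enum:alg:available} is immediate; for Property~\ref{enum:alg:threshold} I would observe that the preemption subroutine admits any available job with processing time strictly below the current threshold, so any available but non-admitted job must have processing time at least $u_{i\tau}$. The function $u^{(i)}$ is piecewise constant with finitely many right-continuous discontinuities because it can only change at admission or completion events on machine $i$, of which there are at most $2|J^{(i)}|$, where $J^{(i)}$ is the set of jobs admitted by the \region to machine $i$. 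In particular, the total number $T$ of maximal constant-threshold intervals on machine $i$ satisfies $T\leq 2|J^{(i)}|+1$.

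Next I would fix a machine $i$ and invoke \cref{theo:otm:Charging} with $\eps/(\eps-\delta)=2$. Letting $X^{(i)}$ denote the set of jobs scheduled on machine $i$ by an optimal non-migratory schedule but not admitted by the \region, the theorem yields
\[
|X^{(i)}|\ \leq\ 2\sum_{t=1}^{T}\frac{\tau_{t+1}-\tau_{t}}{u_{t}}\ +\ T.
\]
The core calculation is a length-conservation argument: grouping the intervals $I_t$ by the job $j\in J^{(i)}$ that is running during them, the total length of such intervals equals exactly $p_{ij}$, because the \region processes admitted jobs to completion in strict SPT order even past their deadlines. Since $u_t=\tfrac{\eps}{4}p_{ij}$ throughout those intervals, each admitted $j$ contributes $4/\eps$ to the sum, whereas $u_t=\infty$ intervals contribute $0$ by convention. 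Summing gives
\[
\sum_{t=1}^{T}\frac{\tau_{t+1}-\tau_{t}}{u_{t}}\ =\ \frac{4}{\eps}\,|J^{(i)}|,
\]
and combining with $T\leq 2|J^{(i)}|+1$ yields $|X^{(i)}|\leq (8/\eps+2)|J^{(i)}|+1$.

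Finally I would handle the corner case $|J^{(i)}|=0$: any job in \opt scheduled on machine $i$ is available for machine $i$ at its release date (its slack is at least $\eps p_{ij}\geq\tfrac{\eps}{2}p_{ij}$), and so the idle-machine branch of the preemption routine would have admitted some job to machine $i$, contradicting $|J^{(i)}|=0$. Hence $|X^{(i)}|\leq (8/\eps+3)|J^{(i)}|$ in all cases, and summing over the machines gives
\[
|\opt|\ \leq\ |X\cup J|\ =\ \sum_{i=1}^{m}|X^{(i)}|+|J|\ \leq\ \bigl(8/\eps+4\bigr)\,|J|,
\]
as claimed. The main obstacle is the length-conservation step: it crucially relies on the \region ignoring deadlines after admission so that every admitted job eventually completes and contributes exactly $p_{ij}$ units of threshold-weighted time; without this completion guarantee the telescoping across intervals would not close.
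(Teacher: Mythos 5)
Your proof mirrors the paper's own argument for this theorem almost step for step: same retrospective threshold $u_{i\tau}=\tfrac{\eps}{4}p_{ij_i(\tau)}$ with $\delta=\eps/2$, same invocation of \cref{theo:otm:Charging}, the same length-conservation grouping of intervals by the running job to get a contribution of $4/\eps$ per admitted job, the same bound $T\leq 2|J^{(i)}|+1$ from each admission interrupting at most one job, and the same disposal of the $|J^{(i)}|=0$ corner case before summing over machines. The only cosmetic difference is that you handle the corner case by a direct contradiction with the idle-machine branch rather than citing the infinite-threshold consequence of \cref{theo:otm:Charging}, but these are equivalent.
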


\begin{proof}
	As in the previous section, fix an instance and an optimal solution \opt. Let~$X$ be the set of jobs in \opt that the \region did not admit. We assume without loss of generality that all jobs in \opt finish on time. Further, let~$J$ denote the set of jobs that the \region admitted. Then, $X\cup J$ is a superset of the jobs in \opt. Thus,~$|X|\leq \big(\frac{8}{\eps} + 3 \big) |J|$ implies \Cref{theo:otm:AdmitEnoughJobs}. 
	
	Consider an arbitrary but fixed machine~$i$. We compare again the throughput of the optimal schedule on machine~$i$ to the throughput of the \region on machine~$i$. 
	Let~$\barx \subseteq X$ denote the jobs in \opt scheduled on machine~$i$ and let~$\barj$ denote the jobs scheduled by the \region on machine~$i$. Then, showing~$|\barx| \leq \big(\frac{8}{\eps} + 3\big) |\barj|$ suffices to prove the main result of this section. Given that the \region satisfies Properties~\ref{enum:alg:available} and~\ref{enum:alg:threshold}, \cref{theo:otm:Charging} already provides a bound on the cardinality of~$\barx$ in terms of the \emph{intervals} corresponding to the schedule on amchine~$i$. Thus, it remains to show that the \region indeed qualifies for applying \cref{theo:otm:Charging} and that the bound developed therein can be translated to a bound in terms of~$|\barj|$. 
	
	We start by showing that the \region satisfies the assumptions necessary for applying \cref{theo:otm:Charging}. Clearly, as the \region only admits a job~$j$ at time~$\tau$ if~$d_j - \tau \geq \big(1+\frac\eps2\big) p_{ij}$, setting~$\delta = \frac\eps2$ proves that the \region satisfies~\ref{enum:alg:available}. For~\ref{enum:alg:threshold}, we retrospectively analyze the schedule generated by the \region. For a time~$\tau$, let~$j_i$ denote the job scheduled on machine~$i$. Then, setting~$u_{i,\tau} := \frac\eps4 p_{ij_i}$ or~$u_{i,\tau} = \infty$ if no such job~$j_i$ exists, indeed provides us with the machine-dependent threshold necessary for~\ref{enum:alg:threshold}. This discussion also implies that~$u^{(i)}$ has only countably many points of discontinuity as there are only finitely many jobs in the instance, and that~$u^{(i)}$ is right-continuous. 
	
	Hence, let~$\bari$ denote the set of maximal intervals~$I_t = [\tau_t, \tau_{t+1})$ for~$t \in [T]$ of constant threshold~$u_{i\tau}$. Thus, by \cref{theo:otm:Charging},
	\begin{equation}\label{eq:otm:Charging}
	|\barx| \leq \sum_{t=1}^{T} \epsdeltafrac \frac{\tau_{t+1} - \tau_t}{\thresh}  + T.
	\end{equation}
	
	As the threshold~$u_{i,\tau}$ is proportional to the processing time of the job currently scheduled on machine~$i$, the interval~$I_t$ either represents an idle interval of machine~$i$ (with~$u_{i\tau} = \infty$) or corresponds to the uninterrupted processing of some job~$j$ on machine~$i$. We denote this job by~$j_t$ if it exists. We consider now the set~$\bari_j \subseteq \bari$ of intervals with~$j_t = j$ for some particular job~$j \in \barj$. As observed, these intervals correspond to job~$j$ being processed which happens for a total of~$p_{ij}$ units of time. Combining with~$\thresh = \frac\eps4 p_{ij}$ for~$I_t \in \bari_j$, we get
	\[
	\sum_{t: I_t \in \bari_j} \frac{\tau_{t+1} - \tau_t}{\thresh} = \frac{p_{ij}}{\frac\eps4 p_{ij}} = \frac4\eps.
	\]
	As~$\delta = \frac{\eps}{2}$, we additionally have that~$\epsdeltafrac = 2$. Hence, we rewrite Equation~\eqref{eq:otm:Charging} by 
	\[
	|\barx| \leq \frac8\eps |\barj| + T.
	\]
	
	It remains to bound~$T$ in terms of~$|\barj|$ to conclude the proof. To this end, we recall that the admission of a job~$j$ to a machine interrupts the processing of at most one previously admitted job. Hence, the admission of~$|\barj|$ jobs to machine~$1$ creates at most~$2|\barj|+1$ intervals. 
	
	If the \region does not admit any job to machine~$i$, i.e.,~$|\barj| = 0$, then~$u_{i\tau} = \infty$ for each time point~$\tau$. Hence, there exists no job scheduled on machine~$i$ by \opt that the \region did not admit. In other words, $X_i = \emptyset$ and~$|X_i| = 0 = |J_i|$. Otherwise,~$ 2|\barj|+1 \leq 3 |\barj|$. 
	Therefore, 
	\[
	|\barx| \leq \left(\frac8\eps + 3\right) |\barj|.
	\]
	Combining with the observation about~$\barx$ and~$\barj$ previously discussed, we obtain
	\[
	|\opt| \leq |X \cup J| = \sum_{i=1}^m |X_i| + |J| \leq \left(\frac8\eps + 3\right) \sum_{i=1}^m |\barj| + |J| = \left(\frac8\eps + 4\right) |J|,
	\]
	which concludes the proof. 
\end{proof}

\subsection{Finalizing the proof of \Cref{theo:otm:UB-no}}

\begin{proof}[Proof of Theorem~\ref{theo:otm:UB-no}]
	In \Cref{theo:otm:CompleteHalf} we show that the \region completes at least half of all admitted jobs~$J$ on time. In \Cref{theo:otm:Charging}, we bound the throughput~$|\opt|$ of an optimal non-migratory solution by $\big(\frac{8}{\eps}+4\big)|J|$. Combining these theorems shows that the \region achieves a competitive ratio of~\(	c = 2 \cdot \big( \frac8\eps  + 4 \big) = \frac{16}{\eps} + 8. \)
\end{proof}

\section{Conclusion} 

In this paper, we close the problem of online single-machine throughput maximization with and without commitment requirements. For both commitment settings, we give an optimal online algorithm. Further, our algorithms run in a multiple-machine environment, even on heterogenous machines. Our algorithms compute non-migratory schedules on unrelated machines with the same competitive ratio $\OO\big(\frac1\eps\big)$ as for a single machine and improve substantially upon the state of the art.

It remains open whether the problem with a large number of machines admits an online algorithm with a better competitive ratio. Indeed, very recently, Moseley et al.~\cite{MoseleyPSZ21} present an $\OO(1)$-competitive algorithm for $m\geq 2$ parallel identical machines without commitment. It remains, however, unclear if this result can be lifted to other machine environments or any of the commitment models.

There are other examples in the literature in which the worst-case ratio for a scheduling problem improves with an increasing number of machines. Consider, e.g., the non-preemptive offline variant of our throughput maximization problem on identical machines. There is an algorithm with approximation ratio of $1.55$ for any $m$ which is improving with increasing number of machines, converging to $1$ as  $m$ tends to infinity \cite{ImLM17}. The second part of the result also holds for the weighted problem.

Another interesting question asks whether randomization allows for improved results. Recall that there is an $\OO(1)$-competitive randomized algorithm for scheduling on a single machine without commitment and without slack assumption~\cite{KalyanasundaramP03}.
Therefore is seems plausible that randomization also helps designing algorithms with improved competitive ratios for the different commitment models, for which only weak lower bounds are known~\cite{ChenEMSS2020}.

Further, we leave migratory scheduling on unrelated machines as an open problem. Allowing migration in this setting means that, on each machine $i$, a certain fraction of the processing time $p_{ij}$ is executed, and these fractions must sum to one. Generalizing the result we leverage for identical machines~\cite{KalyanasundaramP01}, it is conceivable that any migratory schedule can be turned into a valid non-migratory schedule of the same jobs by adding a constant number of machines \emph{of each type}. Such a result would immediately allow to transfer our competitive ratios to the migratory setting (up to constant factors). Devanur and Kulkarni~\cite{DevanurK18} show a weaker result that utilizes speed rather than additional machines. Note that the strong impossibility result of Im and Moseley~\cite{ImM16} does not rule out the desired strengthening because we make the $\varepsilon$-slack assumption for every job and machine eligible for it. Further, we -- as well as Devanur and Kulkarni~\cite{DevanurK18} -- assume that the processing time of each job $j$ satisfies $p_{ij}\leq d_j-r_j$ on any eligible machine $i$, whereas the lower bound in \cite{ImM16} requires jobs that violate this reasonable assumption.

Further research directions include generalizations such as weighted throughput maximization. While strong lower bounds exist for handling weighted throughput with commitment~\cite{ChenEMSS2020}, there remains a gap for the problem without. The known lower bound of~$\Omega\big(\frac1\eps\big)$ already holds for unit weights~\cite{ChenEMSS2020}. A natural extension of the region algorithm bases its admission decisions on the density, i.e., the ratio of the weight of a job to its processing time. The result is an algorithm similar to the~$\OO\big(\frac1{\eps^2} \big)$-competitive algorithm by Lucier et al.~\cite{LucierMNY13}. Both algorithms only admit available jobs and interrupt currently running jobs if the new job is denser by a certain factor. However, we can show that there is a lower bound of~$\Omega\big(\frac1{\eps^2} \big)$ on the competitive ratio of such algorithms. Hence, in order to improve the upper bound for online weighted throughput maximization, one needs to develop a new type of algorithm. 

\newpage
\bibliographystyle{abbrv} 
\bibliography{throughput}
\end{document}